\newtheorem{definition}{Definition}
\newtheorem{remark}[definition]{Remark}
\newtheorem{example}[definition]{Example}
\newtheorem{theorem}[definition]{Theorem}
\newtheorem{proposition}[definition]{Proposition}
\newtheorem{corollary}[definition]{Corollary}
\def\BState{\State\hskip-\ALG@thistlm}
\def\RR{\hbox{\sf I\kern-.14em\hbox{R}}}
\newcommand{\sgn}{\mathrm{sgn}}
\newcommand{\mtov}{\mathrm{vec}}
\begin{document}
\global\def\refname{{\small \bf References}}
%
%

\centerline{\large \bf
       Real Forms of the Complex Neumann System: }
\centerline{\large \bf Real Roots of Polynomial $U_{\cal S}(\lambda)$
       }
\date{}
\vspace{7mm}
\centerline{\small{Tina Novak}}
\centerline{\small \it University of Ljubljana, Faculty of Mechanical Engineering, A\v{s}ker\v{c}eva 6}
\centerline{\small \it SI-1000 Ljubljana, Slovenia}
\centerline{\small \it tina.novak@fs.uni-lj.si }
\vspace{7mm}
\centerline{\small{Janez \v{Z}erovnik}}
\centerline{\small \it University of Ljubljana, Faculty of Mechanical Engineering, A\v{s}ker\v{c}eva 6}
\centerline{\small \it SI-1000 Ljubljana, Slovenia}
\centerline{\small \it Institute of Mathematics, Physics and Mechanics, Jadranska 19}
\centerline{\small \it SI-1000 Ljubljana, Slovenia}
\centerline{\small \it janez.zerovnik@fs.uni-lj.si }
\vspace{18mm}

%

%
%

\begin{abstract}
   \noindent
   The topology of Liouville sets of the real forms of the complex generic Neumann system depends indirectly on the roots of the special polynomial $U_{\cal S}(\lambda)$. For certain polynomials, the existence and positions of the real roots, according to the suitable parameters of the system, is not obvious.
   In the paper, a novel method for checking the existence and positions of the real roots of the polynomials $U_{\cal S}(\lambda)$ is given. The method and algorithm are based on searching of a positive solution of a system of linear equations. We provide a complete solution to the problem of existence of real roots for all special polynomials in case $n=2$. This is a step closer to determining the topology of the Liouville sets.
   \\

   \noindent
   {\it Keywords:} {roots of polynomial; complex Neumann system; positive solution; system of linear equations; Arnold-Liouville level sets.} \\

   \noindent
   2010 Mathematics Subject Classification: 15A06, 26C10, 37J35, 68W01

\end{abstract}

\section{Introduction}

{\color{black}
	Consider the classical {\color{black} Neumann system in Hamiltonian mechanics} that describes the motion of a particle on {\color{black} the sphere $S^n$}
	under the influence of the quadratic potential.
	{\color{black} Natural generalizations are real forms of the complex Neumann system that acquires new integrable systems that describe the motion of a particle on hyperboloids under the influence of the quadratic potential in general $(p,q)$-metrics.}
	A useful way to understand the behavior of the system is to study the  level sets,
	roughly speaking, the sets of points in the ambient space with constant {\color{black} total} energy.
	In other words, the geometry of the corresponding level sets of the systems' moment map of the Hamiltonian system is considered.
	While this is a classical problem in Hamiltonian mechanics, the general understanding of the possible topologies in general case {\color{black} for all real forms of the complex Neumann system} is surprisingly limited.
	For example,  even for  $n=2$, the sphere and some of the hyperboloids are well studied,
	but until now there was  no complete classification of all possible topologies.
	In this paper we propose a general method that makes it possible to
	classify all possible topologies for any surface (sphere or hyperboloid).
	Here, our results are limited to $n=2$ because the
	calculations are computationally demanding.
	However, the method can be in principle used for higher dimensions, if computational complexity issues are addressed properly.
	
	Here we provide a fresh look at the problem.
	In other words, the question is  regarded as
	a  combinatorial problem that is translated into a question whether there are positive solutions to certain systems of linear equations.
	Using an old method to test the  existence of a positive solution of a system of linear equations (see \cite{Dines}),
	we are able to provide complete characterization of all possible cases and  also provide the corresponding topologies.
	Thus we provide another example how combinatorial reasoning combined with linear algebra can be used to answer questions that arise in
	topology, and deeper, in theoretical mechanics.
	Our approaches opens several challenging avenues of further research. Computational methods may be implemented more efficiently, can be possibly parallelized, thus allowing application to larger $n$. One the other side, one can investigate correspondence between the results of the method and separated parts of the phase space of a system.
	
	Besides providing a complete list of possible topologies for case $n=2$, the main contribution of this work is a general method. The basic idea behind is to switch the
	role of constants and unknowns in the systems of equations.
	It has been  well understood that the
	systems behavior corresponds to a solution to certain system of equations
	after  fixing the values of certain arbitrary constants.
	Our method, that is explained in detail in this paper,
	{\color{black} varies} the constants and asks which sets of constants give rise to
	systems with nontrivial solutions.
	We employ well known tools from topology and mechanics and in particular, an algorithm for positive solutions of a system of linear equations.
	
	The rest of the paper is organized  as follows.
	After the introduction,  the  polynomial $U_{\cal S}$ is described and the problem and the idea of finding roots is explained.
	In section 3, the basics of  real forms of the complex Neumann system are outlined.
	The method of finding a positive solution of a system of linear equations is recalled in section 4.
	In section 5, the scheme of the algorithm is presented. For the example ${\cal S}=\{1,3\}$, the procedure of the algorithm is given. Because of the monotonic conversions of equations of the system, the matrix forms of a given system and a reduced system are introduced.
	In Conclusion, we comment the results of the algorithm and describe the topology of all possible Arnold-Liouville level sets for ${\cal S}=\{1,3\}$. After that, open issues are mentioned. In appendices, concrete operation of the algorithm, results of algorithm and additional analysis are given.
}

\section{{\color{black}
		Polynomial $U_{\cal S}(\lambda)$ and {\color{black} the main idea}}}

Motivated by a classical theory of Hamiltonian systems, a question arises how to determine the topology of the corresponding Arnold-Liouville level sets\footnote{Arnold-Liouville level set is the regular level set of the system's moment map}.
It was observed in \cite{Mumford}
that the  topology of the Arnold-Liouville level set of the Neumann system depends on the positions of the roots of the polynomial
\begin{equation*}
U(\lambda)=\prod_{i=1}^{n+1}(\lambda-a_i)\sum_{j=1}^{n+1}\frac{q_j^2}{\lambda-a_j}
\end{equation*}
of degree $n$ with real non-fixed coefficients where the given constants $a_1,a_2,\ldots,a_{n+1}$ are arranged as $a_1 < a_2 < \ldots < a_{n+1}$ and the equality $q_1^2 + q_2^2 + \ldots + q_{n+1}^2=1$ holds. Assuming that the roots $\lambda_1,\lambda_2,\ldots,\lambda_n$ of the polynomial $U(\lambda)$ do not coincide with the given constants $a_1,a_2,\ldots,a_{n+1}${\color{black}.}
{\color{black}Since $\sgn (U(a_{n+1-k}))=(-1)^k$, all roots $\lambda_1, \lambda_2,\ldots,\lambda_n$ are real and arranged:
	$$ a_1 < \lambda_1 < a_2 < \lambda_2 < \ldots < a_n < \lambda_n < a_{n+1}.$$}

Certain real forms of the complex Neumann system provide new real Hamiltonian systems. Thus, the topologies of the corresponding Arnold-Liouville level sets depend on the positions of the roots of 
{\color{black} the polynomials $U_{\cal S}(\lambda)$ defined as follows. }

{\color{black}
	Let ${\cal S} \subseteq \{1,2,\ldots,n+1\}$  be a nonempty subset, $\epsilon_{{\cal S},j}$ the discrete sign function
	\begin{equation}
	\epsilon_{{\cal S},j} = \left\{ \begin{array} {rl}
	1  & ; \;\; j \in {\cal S} \\
	-1 & ; \;\; j \in {\cal S}^{\mathsf c} \,,
	\end{array} \right.
	\label{eq:epsilon}
	\end{equation}
}{\color{black}
	and $A(\lambda)$ the polynomial $A(\lambda)=\prod_{i=1}^{n+1}(\lambda-a_i)$.
	Real forms of the complexification of the polynomial $U(\lambda)$
	{\color{black}
		(for detailed explanation of the complexification and the family of real forms see \cite{Novak})
	}
	are polynomials
	\begin{equation}
	U_{\cal S}(\lambda) = A(\lambda) \sum_{j=1}^{n+1} \epsilon_{{\cal S},\j} \frac{q_j^2}{\lambda-a_j},
	\label{eq:USLambda}
	\end{equation}
	where the constraint $\sum_{j=1}^{n+1} \epsilon_{{\cal S},j}q_j^2 = 1$ holds.
}
As the positions of the roots can not be easily determined for given coefficients, obtained from constants $a_1,a_2,\ldots,a_{n+1}$ and constrained variables $q_1,q_2,\ldots,q_{n+1}$, in general, only partial answers have been known (see \cite{Novak}).
{\color{black}
	In \cite{Novak}, it is proved that the polynomial $U_{\cal S}(\lambda)$
	has $n$ real roots if ${\cal S}=\{m,m+1,\ldots,k\}$ for some $m \in \{1,2,\ldots,n+1\}$ and some $k \in \{m,m+1,\ldots,n+1\}$ taking into account the constraint $\sum_{j=1}^{n+1}\epsilon_{{\cal S},j}q_{j}^{2}=1$. If the set ${\cal S}$ does not contain consecutive integers, {\color{black} it is not possible to deduce at the positions and the existence of all real roots with respect to the signs of the values $U_{\cal S}(a_j)$, $j=1,2,\ldots,n+1$. Such a "problem" is encountered even in case $n=2$:}
	\begin{example}
		\label{ex:11}
		For $n=2$ and ${\cal S}=\{1,3\}$, we have the polynomial
		\begin{eqnarray*}
			U_{\{1,3\}}(\lambda) & = & (\lambda-a_{1})(\lambda-a_{2})(\lambda-a_{3})\left(\frac{q_{1}^{2}}{\lambda-a_{1}} - \frac{q_{2}^{2}}{\lambda-a_{2}} + \frac{q_{3}^{2}}{\lambda-a_{3}}\right) = \nonumber\\
			& = & q_{1}^{2}(\lambda-a_{2})(\lambda-a_{3}) - q_{2}^{2}(\lambda-a_{1})(\lambda-a_{3}) + q_{3}^{2}(\lambda-a_{1})(\lambda-a_{2})\,. \label{eq:polynom13}
		\end{eqnarray*}
		Since $q_{1}^{2}-q_{2}^{2}+q_{3}^{2}=1$, the leading coefficient of $U_{\{1,3\}}$ is $1$. Furthermore, $\sgn(U_{\{1,3\}}(a_{1}))=\sgn(U_{\{1,3\}}(a_{2}))=\sgn(U_{\{1,3\}}(a_{3}))=1$, and we can not conclude easily about the existence or positions of real roots of the polynomial.
	\end{example}
	\begin{remark}
		{\color{black} The polynomial $U_{\{1,3\}}(\lambda)$ can be written as
			$$ U_{\{1,3\}}(\lambda) = \lambda^2 + \Big( -(a_2+a_3)q_1^2+(a_1+a_3)q_2^2-(a_1+a_2)q_3^2 \Big)\lambda + a_2a_3q_2^2-a_1a_3q_2^2+a_1a_2q_3^3 .$$
			Its discriminant is}
		$$ D = \left(-(a_2+a_3)q_{1}^{2} + (a_1+a_3)q_{2}^{2}-(a_1+a_2)q_{3}^{2}\right)^{2} - 4(a_2a_3q_{1}^{2}-a_1a_3q_{2}^{2}+a_1a_2q_{3}^{2}) \,.$$
		{\color{black}For fixed constants $a_1=1$, $a_2=2$, and $a_3=3$, it
			{\color{black}
				can be easily checked, that
			}
			the discriminant $D$ at the point $(q_1,q_2,q_3) = (1,1,1)$ is negative but at $(q_1,q_2,q_3)=(1,3,3)$ is positive.}
		Even if we find a set of points $(q_1,q_2,q_3) \in {\cal H}_{\{1,3\}}^{2}=\{(q_1,q_2,q_3)\in{\mathbb R}^3\,\big|\; q_1^2-q_2^2+q_3^2=1\}$ for which the discriminant $D$ is positive (hence real roots exist),
		{\color{black} this may not give much information about}
		the positions of real roots according to
		{\color{black} other values of}
		parameters $a_1$, $a_2$ and $a_3$.
	\end{remark}
	\begin{remark}
		In general, for a polynomial of degree $n$, the necessary condition for the existence of real roots is the positivity of the discriminant. If $n \geq 4$, the positivity of the discriminant is not a sufficient condition.
	\end{remark}
	{\color{black} For specific values $a_1,a_2,\ldots,a_{n+1}$ and for the given initial point $(q_1,q_2,\ldots,q_{n+1})_0$ of the system, the roots of the polynomial $U_{\cal S}(\lambda)$ can be determined by some of well known methods for finding the roots of polynomials (see \cite{Demmel,Mekwi}). Consequently, in such special case, if the relations between the computed roots and other $n$ suitable parameters of the system are known, the topology of the Arnold-Liouville level set follows.
		But our aim is to try to classify or partially classify the families of the polynomials real forms $U_{\cal S}(\lambda)$ according to the existence of real roots in general, i.e. for arbitrary ordered constants $a_1,a_2,\ldots,a_n$. }
	\\
	
	At this situation it makes sense to ask the following questions.
	
	{\color{black}
		1. Can we find, in general or in some special case, the set of $(q_{1},q_{2},\ldots,q_{n+1}) \in {\cal H}_{\cal S}^{n}$ for which the polynomial
		\hspace{0.4cm}$U_{\cal S}(\lambda)$ has only real roots?
		
		2. Can we determine, in general or in some special case, positions of $n$ real roots of polynomial $U_{\cal S}(\lambda)$ according
		\hspace{0.4cm}to the given parameters $a_{1},a_{2},\ldots,a_{n+1}$?
	}
	3. Can we describe, in general or in some special case regarding $n$ or ${\cal S}$, the topology of the Arnold-Liouville set? \\
	
	\noindent
	{\color{black} Since the polynomial $U_{\cal S}(\lambda)$ has very special form and for given $\lambda$, where $\lambda \ne a_{j}$, $j=1,2,\ldots,n+1$, the signs of the linear factors $(\lambda - a_j)$ follow.}
	In our approach the key idea is to replace the roles of constants and variables. Instead of assuming that $q_{1},q_{2},...,q_{n+1}$ are (given) coordinates of a point on the hyperboloid ${\cal H}_{\cal S}^{n}$ and searching for roots of polynomial $U_{\cal S}(\lambda)$, we fix values (potential roots) $\lambda_{1},\lambda_{2},\ldots,\lambda_{n}$ up to the intervals $(-\infty,a_{1})$, $(a_{i},a_{i+1})$ for $i=1,2,\ldots,n$ and $(a_{n+1},\infty)$, and ask whether a positive solution $(q_{1}^{2},q_{2}^{2},\ldots,q_{n+1}^{2})$ of the system of linear equations
	\begin{eqnarray}
	\sum_{j=1}^{n+1} \epsilon_{{\cal S},j} q_{j}^{2} & = & 1  \nonumber\\
	\sum_{j=1}^{n+1} \epsilon_{{\cal S},j} \frac{A(\lambda_{1})}{\lambda_{1}-a_{j}}q_{j}^{2} & = & 0  \nonumber\\
	\sum_{j=1}^{n+1} \epsilon_{{\cal S},j} \frac{A(\lambda_{2})}{\lambda_{2}-a_{j}}q_{j}^{2} & = & 0  \label{eq:system}\\
	&  \vdots &   \nonumber\\
	\sum_{j=1}^{n+1} \epsilon_{{\cal S},j} \frac{A(\lambda_{n})}{\lambda_{n}-a_{j}}q_{j}^{2} & = & 0 \nonumber
	\end{eqnarray}
	exists. By
	{\color{black}
		setting the
	}
	positions of real parameters (roots) $\lambda_{1}, \lambda_{2}, \ldots, \lambda_{n}$, the signs of all coefficients of system (\ref{eq:system}) are determined.
	Recall that from the definition of roots of a polynomial and the additional condition $\sum_{j=1}^{n+1}\epsilon_{{\cal S},j}q_{j}^{2}=1$ for coordinates $q_{1},q_{2},\ldots,q_{n+1}$, the following statement is obvious.
	Let $\sum_{j=1}^{n+1}\epsilon_{{\cal S},j}q_{j}^{2}=1$.
	Real values $\lambda_{1},\ldots,\lambda_{n}$ are the roots of the polynomial
	$U_{\cal S}(\lambda)$ (see (\ref{eq:USLambda}))
	for some $q_{1},q_{2},\ldots,q_{n+1}$ if and only if the system of $n+1$ non-homogeneous linear equations
	(\ref{eq:system}) admits a positive solution $(q_{1}^{2},q_{2}^{2},\ldots,q_{n+1}^{2})$.
	
	In {\color{black} this} paper, the method for checking the existence of a positive solution of a linear homogeneous system with real coefficients is recalled (from \cite{Dines}). As is seen in one of low dimension cases, elaborated in subsection \ref{subsection:alg n2}, the procedure is time-consuming. Hence the problem yells after the computer computation. {\color{black}
		In order to implement the method, we develop both
		the matrix notation of the problem for finding
		the positive solution of a linear system and the matrix notation of the algorithm
		(see Proposition~\ref{prop:reducedmatrix}).
		The algorithm for checking the existence of a positive solution of the system (\ref{eq:system}) was run in MATLAB.
		For illustration,
	}
	two examples of scripts and two main functions are given.
	By this simple method we obtain a new result towards completely determining the Arnold-Liouville level sets of the real forms of the complex Neumann system.
	For the case $n=2$, results of the algorithm for all nonempty subsets ${\cal S} \subseteq \{1,2,3\}$ are in Table \ref{Tab:1}, Table \ref{Tab:2} and Table \ref{Tab:3}. For the subsets $\{1\}$, $\{2\}$, $\{3\}$, $\{1,2\}$, $\{2,3\}$ and $\{1,2,3\}$, the results are trivial and expected.
	{\color{black} For the subset $\{1,3\}$,
		{\color{black}
			using our method we provide a complete list of possible cases with the
			corresponding topologies.
			In particular,
		}
		we find out, that for arbitrary parameters $a_1 < a_2 < a_3$ the polynomial $U_{\{1,3\}}(\lambda)$ has real roots for some $(q_{1},q_{2},q_{3}) \in {\cal H}_{\{1,3\}}^{2}$ (see Table~\ref{Tab:2}, where the middle subtable represents the results of the algorithm for the case ${\cal S}=\{1,3\}$. Number 1 in the second column means that a root in the specific interval can exists, 0 means that in the specific interval is no root).
	}
	{\color{black}
		Furthermore,  for each case where
	}
	{\color{black}
		the polynomial $U_{\{1,3\}}(\lambda)$ has real roots}, the Arnold-Liouville level sets of the Hamiltonian system $(T^{*}{\cal H}_{\{1,3\}}^{2},\omega_{\{1,3\}},H_{\{1,3\}})$ are exactly determined.

	{\color{black}
		\section{Real forms of the Complex Neumann System {\color{black} and the topology of the Arnold-Liouville level set}}
		
		{\color{black}
			In the Hamiltonian mechanics, the geometry of the Arnold-Liouville level sets provides very important information about a Hamiltonian system. It gives the topology of the set of all possible point's positions of a system in the ambient space.
			
			The Neumann system describes the motion of a particle on the sphere $S^{n}$ under the influence of a quadratic potential. For $n=2$, the system is introduced in \cite{Neumann}. Let $(q,p)=(q_{1},\ldots,q_{n+1},p_{1},\ldots,p_{n+1}) $ $ \in T^{*}{\mathbb R}^{n+1}$ be ambient coordinates and $a_{1},\ldots,a_{n+1}$ real constants. Throughout the paper, we assume
			\begin{equation*}
			a_{1} < a_{2} < \ldots < a_{n+1}\,.
			\end{equation*}
			This corresponds to the so-called generic case of the Neumann system. {\color{black} Some generalizations of the Neumann system were studied in \cite{Saksida}. In \cite{Vuk}, the integrability and the Arnold-Liouville level sets of the confluent ($a_n=a_{n+1}$) Neumann system were investigated. The general degenerate Neumann system and its reduction were researched in \cite{Dullin1}.}
			
			In the Hamiltonian formalism, the Neumann system can be written as a triple $(T^{*}\!S^{n},\omega,H)$ where $T^{*}\!S^{n}$ is the cotangent bundle of the sphere $S^{n}$, $\omega$ is the canonical $2$-form in ambient coordinates, $\omega=\sum_{j=1}^{n+1}dq_{j} \wedge dp_{j}$, and
			\begin{equation*}
			H = H(q,p) = \frac{1}{2}\left( \sum_{j=1}^{n+1}q_{j}^{2}\sum_{j=1}^{n+1}p_{j}^{2}-(\sum_{j=1}^{n+1}q_{j}p_{j})^{2}\right)+\frac{1}{2}\left(2-\sum_{j=1}^{n+1}q_{j}^{2}\right)\sum_{j=1}^{n+1}a_{j}q_{j}^{2}
			\end{equation*}
			is the Hamiltonian function. The wedge product $\wedge$ is the exterior product of forms and it is clearly explained in \cite{Arnold} (Chapter 7 and Chapter 8). The natural complexification converts real variables $(q,p)$ into complex $(Q,P) \in T^{*}{\mathbb C}^{n+1}$, and the Neumann system into the complex Neumann system $(T^{*}(S^{n})^{\mathbb C}, \omega^{\mathbb C}, H^{C})$ (see \cite{Mumford, Novak}, see \cite{Audin} for the meaning of such complex systems.
			Hence, the classical Neumann system as described above is the real form of the complex Neumann system
			{\color{black}
				with
			}
			regard to the standard antiholomorphic involutive automorphism
			\begin{eqnarray*}
				\tau: T^{*}{\mathbb C}^{n+1} & \longrightarrow   & T^{*}{\mathbb C}^{n+1} \\
				(Q_{1},\ldots,Q_{n+1},P_{1},\ldots,P_{n+1}) & \longmapsto & (\overline{Q}_{1},\ldots,\overline{Q}_{n+1},\overline{P}_{1},\ldots,\overline{P}_{n+1})
			\end{eqnarray*}
			By $\overline{Q}_{j}$ and $\overline{P}_{j}$ we denote the complex conjugate of $Q_{j}$ and $P_{j}$, respectively.
			In \cite{Novak}, following the general introduction of the real forms of complex Hamiltonian systems in \cite{Gerdjikov}, the real forms of the complex Neumann system are introduced.
			
			Let ${\cal S} \subseteq \{1,2,\ldots,n+1\}$  be a nonempty subset and $\epsilon_{{\cal S},j}$ the discrete function defined in (\ref{eq:epsilon}).
			Denote by $J_{\cal S}$ the diagonal matrix with $1$ on positions ${\cal S}$ and $-1$ elsewhere.
			Other real forms are obtained as real forms of the antiholomorphic involutive automorphisms
			\begin{eqnarray}
			\tau_{\cal S}: T^{*}{\mathbb C}^{n+1} & \longrightarrow & T^{*}{\mathbb C}^{n+1} \nonumber \\
			(Q,P)             & \longmapsto     & (J_{\cal S}\overline{Q}, J_{\cal S}\overline{P}) \,. \label{eq:tau}
			\end{eqnarray}
			According to $\tau_{\cal S}$, we denote the corresponding real system as the triple $(T^{*}{\cal H}_{\cal S}^{n}, \omega_{\cal S},H_{\cal S})$, where $T^{*}{\cal H}_{\cal S}^{n}=\{(q,p)\in T^{*}{\mathbb R}^{n+1}\, \big| \; \sum_{j=1}^{n+1}\epsilon_{{\cal S},j}q_{j}^{2}=1, \sum_{j=1}^{n+1}\epsilon_{{\cal S},j}q_{j}p_{j}=0 \}$, i.e. the cotangent bundle of the hyperboloid ${\cal H}_{\cal S}^{n}=\{q \in {\mathbb R}^{n+1}\, \big| \; \sum_{j=1}^{n+1}\epsilon_{{\cal S},j}q_{j}^{2}=1 \}$, $\omega_{\cal S}=\sum_{j=1}^{n+1}\epsilon_{{\cal S},j} dq_{j} \wedge dp_{j}$, and
			$$ H_{\cal S}(q,p) = \frac{1}{2}\left( \sum_{j=1}^{n+1}\epsilon_{{\cal S},j}q_{j}^{2}\sum_{j=1}^{n+1}\epsilon_{{\cal S},j}p_{j}^{2}-(\sum_{j=1}^{n+1}\epsilon_{{\cal S},j}q_{j}p_{j})^{2}\right)+\frac{1}{2}\left(2-\sum_{j=1}^{n+1}\epsilon_{{\cal S},j}q_{j}^{2}\right)\sum_{j=1}^{n+1}\epsilon_{{\cal S},j}a_{j}q_{j}^{2}.$$
			The topology of the Arnold-Liouville level sets is known only for two families of the real forms, i.e. for all subsets ${\cal S} = \{1,2,\ldots,k\}$ for $k \in \{1,2,\ldots,n+1\}$ and all subsets ${\cal S} = \{k\}$ for $k \in \{1,2,\ldots,n+1\}$ (see \cite{Novak}).
			
			The complex Neumann system is a special example of the Mumford system (see \cite{Mumford, Novak, Vanhaecke}). It is characterized by the Lax pair $(L^{\mathbb C}(\lambda),M^{\mathbb C}(\lambda))$ of $2 \times 2$ matrices where
			$$ L^{\mathbb C} = \left(\begin{matrix}
			V^{\mathbb C}(\lambda)  &  W^{\mathbb C}(\lambda) \\
			U^{\mathbb C}(\lambda)  &  -V^{\mathbb C}(\lambda)
			\end{matrix}\right)$$
			for suitable polynomials $U^{\mathbb C}(\lambda)$, $V^{\mathbb C}(\lambda)$ and $W^{\mathbb C}(\lambda)$.
			It is known that the topology of an Arnold-Liouville level set of a real form directly depends on the positions of roots of the suitable real form of $U^{\mathbb C}(\lambda)$ according to the roots of suitable real form of $f^{\mathbb C}(\lambda)=U^{\mathbb C}(\lambda)W^{\mathbb C}(\lambda)+(V^{\mathbb C}(\lambda))^{2}$, and indirectly depends on the existence and positions of real roots of suitable real form of $U^{\mathbb C}(\lambda)$ according to the given constants $a_{1},a_{2},\ldots,a_{n+1}$.
			{\color{black} The topology of Arnold-Liouville level set of the Neumann system is   described in detail elsewhere, see c.f. \cite{Mumford} and \cite{Dullin2}.}
			
			{\color{black}
				Summarizing results in \cite{Vanhaecke} (pages 86-89), the polynomial
				$$ f_{\cal S}(\lambda) = U_{\cal S}(\lambda)W_{\cal S}(\lambda)+(V_{\cal S}(\lambda))^2=\prod_{i=1}^{n+1}(\lambda-a_i) \cdot \prod_{j=1}^{n}(\lambda-b_j) $$
				and roots $\lambda_1, \lambda_2,\ldots,\lambda_n$ of $U_{\cal S}(\lambda)$ determine the topology of the Arnold-Liouville level set. Therefore this can be either a torus or a cylinder or a disc with $g({\cal C})-1$ holes. The number of holes depends on the genus $g(\cal C)$ of hyperelliptic curve ${\cal C}: \mu^2+f(\lambda)=0$.}
			
			
			\bigskip
			
			\section{Positive solution of a system of linear equations}
			
			In \cite{Dines}, an algorithm for determining whether a given system of linear equations with real coefficients admits a positive solution is given. In this section, the idea of the algorithm is summarized with emphasis on a solution of a system of nonhomogeneous equations.
			
			\subsection{A single non-homogeneous equation}
			
			Assume that a non-homogeneous linear equation is given
			\begin{equation*}
			b_{1}x_{1}+b_{2}x_{2}+\ldots+b_{n}x_{n}=b \,.
			\end{equation*}
			For $b_{n+1}=-b$ and introducing the condition $x_{n+1}=1$, the equivalent homogeneous equation has a form
			\begin{equation}
			b_{1}x_{1}+b_{2}x_{2}+\ldots+b_{n}x_{n}+b_{n+1}x_{n+1} = 0 \, .
			\label{eq:homlineq3}
			\end{equation}
			
			Now, temporarily forget both conditions above.  A solution of homogeneous equation (\ref{eq:homlineq3}) can be found by the procedure described in \cite{Dines} that is briefly recalled below. Assume that some of coefficients in (\ref{eq:homlineq3}) are positive and some of coefficients are negative (which is a necessary condition for the existence of a positive solution for (\ref{eq:homlineq3})). The coefficients can be divided into two sets
			\begin{eqnarray*}
				{\rm positive \, coeficients}: & &  \{ b_{i} \} \quad {\rm for} \; i \in I = \{i_{1},i_{2},\ldots,i_{P}\} = \{i_{p} \,|\;p=1,2,\ldots,P\} \\
				{\rm negative \, coeficients}: & &  \{ b_{j} \} \quad {\rm for} \; j \in J = \{j_{1},j_{2},\ldots,j_{Q}\} = \{j_{q} \,|\;q=1,2,\ldots,Q\} \,.
			\end{eqnarray*}
			Note that $P+Q=n+1$. As soon as the equation (\ref{eq:homlineq3}) is written in the form
			\begin{equation*}
			\sum_{i \in I}b_{i}x_{i} = -\sum_{j \in J}b_{j}x_{j}
			\end{equation*}
			one among the positive solutions is obvious
			\begin{eqnarray*}
				x_{i} & = & -\sum_{j \in J} b_{j}   \qquad{\rm for} \; i \in I \\
				x_{j} & = & \;\;\; \sum_{i \in I} b_{i}   \qquad{\rm for} \; j \in J \,.
			\end{eqnarray*}
			

			\subsection{A system of non-homogeneous linear equations}
			
			Consider first the system of $m$ homogeneous equations given by
			\begin{equation}
			\sum_{s=1}^{n+1} b_{r,s}x_{s} = 0    \qquad ; \quad r=1,2,\ldots,m \,.
			\label{system:m}
			\end{equation}
			As in the previous subsection, the coefficients of the first equation can be separated into two subsets
			\begin{eqnarray*}
				{\rm positive \, coeficients}: & &  \{ b_{1,i} \} \quad {\rm for} \; i \in I=\{i_{1},i_{2},\ldots,i_{P}\} =  \{i_{p} \,|\;p=1,2,\ldots,P\}\\
				{\rm negative \, coeficients}: & &  \{ b_{1,j} \} \quad {\rm for} \; j \in J=\{j_{1},j_{2},\ldots,j_{Q}\} =  \{j_{q} \,|\;q=1,2,\ldots,Q\}\,,
			\end{eqnarray*}
			where $P+Q=n+1$, and the first equation can be written in the form
			\begin{equation}
			\!\!\!\!\!\!\!\!\!\!\!\!\!\!\!\!\!\!\!\!\!\!\!\!\!\sum_{i \in I}b_{1,i}x_{i} = -\sum_{j \in J}b_{1,j}x_{j} \,.
			\label{eq:1}
			\end{equation}
			Then, the remaining equations have to be written in the form
			\begin{equation}
			\qquad\qquad\qquad   \sum_{j \in J}b_{r,j}x_{j} = -\sum_{i \in I}b_{r,i}x_{i}   \qquad ; \quad r=2,3,\ldots,m \,.
			\label{eq:m-1}
			\end{equation}
			We multiply both sides of each equation of (\ref{eq:m-1}) by the corresponding sides of (\ref{eq:1}) and obtain $m-1$ equations
			\begin{equation}
			\sum_{i\in I,j\in J} b_{r,i,j}x_{i,j} = 0 \qquad {\rm for} \; r=2,3,\ldots,m ,
			\label{system:m-1}
			\end{equation}
			where
			\begin{equation}
			b_{r,i,j}=b_{1,i}b_{r,j}-b_{1,j}b_{r,i} \qquad {\rm and} \qquad x_{i,j}=x_{i}x_{j}\,.
			\label{eq:tripledoubleindex}
			\end{equation}
			Recall the next theorem that is proven in \cite{Dines}.
			\begin{theorem}[Dines \cite{Dines}] To every positive solution of the system (\ref{system:m}) there corresponds a positive solution of the system (\ref{system:m-1}), and conversely.
			\end{theorem}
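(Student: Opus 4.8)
The plan is to prove both implications constructively, the forward one via the substitution $x_{i,j}=x_{i}x_{j}$ and the backward one via suitable weighted partial sums of the reduced unknowns. As preparation I would record the shape of the systems after the sign split used in~(\ref{eq:1})--(\ref{eq:m-1}): the first equation reads $\sum_{i\in I}b_{1,i}x_{i}=-\sum_{j\in J}b_{1,j}x_{j}$, the $r$-th equation ($r\ge 2$) reads $\sum_{j\in J}b_{r,j}x_{j}=-\sum_{i\in I}b_{r,i}x_{i}$, and, with $b_{r,i,j}$ as in~(\ref{eq:tripledoubleindex}), the reduced system~(\ref{system:m-1}) satisfies the bilinear identity
$$\sum_{i\in I,\,j\in J}b_{r,i,j}\,x_{i,j}=\sum_{j\in J}b_{r,j}\Big(\sum_{i\in I}b_{1,i}x_{i,j}\Big)-\sum_{i\in I}b_{r,i}\Big(\sum_{j\in J}b_{1,j}x_{i,j}\Big).$$
I would also keep in mind the standing hypothesis that the first equation genuinely has coefficients of both signs, i.e. $I\neq\emptyset$ and $J\neq\emptyset$; otherwise~(\ref{system:m}) has no positive solution, (\ref{system:m-1}) is vacuous, and the equivalence holds trivially.

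For the forward direction, given a positive solution $(x_{1},\ldots,x_{n+1})$ of~(\ref{system:m}) I would set $x_{i,j}:=x_{i}x_{j}>0$. Since the two sides of~(\ref{eq:1}) are equal (and positive), multiplying them by the respective sides of the $r$-th equation of~(\ref{eq:m-1}) and regrouping the resulting double sum gives exactly $\sum_{i\in I,j\in J}b_{r,i,j}x_{i,j}=0$ for each $r=2,\ldots,m$; hence $(x_{i,j})$ is a positive solution of~(\ref{system:m-1}).

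For the converse, starting from a positive solution $(x_{i,j})_{i\in I,\,j\in J}$ of~(\ref{system:m-1}) I would define
$$x_{j}:=\sum_{i\in I}b_{1,i}\,x_{i,j}\ \ (j\in J),\qquad x_{i}:=\sum_{j\in J}(-b_{1,j})\,x_{i,j}\ \ (i\in I),$$
which are strictly positive because each $b_{1,i}>0$ and each $-b_{1,j}>0$ and $I,J\neq\emptyset$. Since the sign split partitions $\{1,\ldots,n+1\}$, this defines a full vector $(x_{s})_{s=1}^{n+1}$. By the bilinear identity the $r$-th equation of~(\ref{system:m-1}) turns into $\sum_{j\in J}b_{r,j}x_{j}+\sum_{i\in I}b_{r,i}x_{i}=0$, i.e. the $r$-th equation of~(\ref{system:m}) for $r=2,\ldots,m$. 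The only equation left to verify is the first one, and here I would note that $\sum_{i\in I}b_{1,i}x_{i}$ and $\sum_{j\in J}(-b_{1,j})x_{j}$ both equal $\sum_{i\in I,j\in J}b_{1,i}(-b_{1,j})x_{i,j}$, hence agree, which is precisely $\sum_{s=1}^{n+1}b_{1,s}x_{s}=0$.

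I do not expect a serious obstacle: once the bilinear expansion of $b_{r,i,j}$ is written down, each direction is a one-line verification. The point worth stating explicitly is conceptual rather than computational --- the reduced unknowns $x_{i,j}$ need not factor as $x_{i}x_{j}$, so one cannot recover a solution of~(\ref{system:m}) by literally inverting the substitution; the weighted partial sums above do the job for an arbitrary positive $(x_{i,j})$. As a sanity check one may feed a rank-one input $x_{i,j}=x_{i}x_{j}$ into those formulas and confirm that the output is $t\cdot(x_{1},\ldots,x_{n+1})$ with $t=\sum_{i\in I}b_{1,i}x_{i}>0$, i.e. a positive rescaling of the original solution, as homogeneity requires.
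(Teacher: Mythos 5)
Your proposal is correct and follows exactly the route the paper (and Dines) takes: the forward direction by the substitution $x_{i,j}=x_{i}x_{j}$, and the converse via the weighted sums $x_{j}=\sum_{i\in I}b_{1,i}\overline{x}_{i,j}$, $x_{i}=-\sum_{j\in J}b_{1,j}\overline{x}_{i,j}$, which are precisely the recovery formulas the paper states immediately after the theorem. The only difference is that the paper defers the verification to \cite{Dines}, whereas you carry out the (correct) bilinear expansion explicitly.
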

			
			If a positive solution $(\overline{x}_{1},\ldots,\overline{x}_{n+1})$ of the system $(\ref{system:m})$ is given, the solution of the system (\ref{system:m-1}) is obvious. Conversely, if the system (\ref{system:m-1}) admits a positive solution $(\overline{x}_{i,j})$, then
			\begin{equation*}
			x_{i} = -\sum_{j\in J}b_{1,j}\overline{x}_{i,j} , \qquad {\rm and} \qquad x_{j}=\sum_{i \in I}b_{1,i}\overline{x}_{i,j} \,.
			\end{equation*}
			
			Next corollary is related to the existence of a positive solution of a system of non-homogeneous linear equations.
			\begin{corollary} To every positive solution of the system (\ref{system:m}) where $x_{n+1}=1$ there corresponds a positive solution of the system (\ref{system:m-1}), and conversely.
			\end{corollary}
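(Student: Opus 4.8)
The plan is to deduce the corollary directly from Dines' Theorem, using only the additional fact that system~(\ref{system:m}) is homogeneous, so that its positive solutions can be rescaled at will. I would prove the two implications separately, since they are not quite symmetric.

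For the implication from~(\ref{system:m}) to~(\ref{system:m-1}): suppose $(\overline{x}_{1},\ldots,\overline{x}_{n+1})$ is a positive solution of~(\ref{system:m}) that in addition satisfies $\overline{x}_{n+1}=1$. Then it is, in particular, a positive solution of the homogeneous system~(\ref{system:m}), so Dines' Theorem applies verbatim and the vector with entries $\overline{x}_{i,j}=\overline{x}_{i}\,\overline{x}_{j}$, $i\in I$, $j\in J$, is a positive solution of~(\ref{system:m-1}). The condition $\overline{x}_{n+1}=1$ is not needed in this direction; it merely singles out a subclass of the positive solutions of~(\ref{system:m}).

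For the converse: suppose $(\overline{x}_{i,j})_{i\in I,\,j\in J}$ is a positive solution of~(\ref{system:m-1}). By Dines' Theorem there is a positive solution of~(\ref{system:m}); concretely one may take $\widehat{x}_{i}=-\sum_{j\in J}b_{1,j}\overline{x}_{i,j}$ for $i\in I$ and $\widehat{x}_{j}=\sum_{i\in I}b_{1,i}\overline{x}_{i,j}$ for $j\in J$, which is componentwise positive because $b_{1,i}>0$ for $i\in I$ and $b_{1,j}<0$ for $j\in J$. This solution need not satisfy $\widehat{x}_{n+1}=1$, but regardless of whether the index $n+1$ falls in $I$ or in $J$ we have $\widehat{x}_{n+1}>0$; and since~(\ref{system:m}) is homogeneous, the rescaled vector $x_{s}:=\widehat{x}_{s}/\widehat{x}_{n+1}$ for $s=1,\ldots,n+1$ is still a positive solution of~(\ref{system:m}), now with $x_{n+1}=1$.

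I do not expect a genuine obstacle here: the corollary is essentially bookkeeping on top of Dines' Theorem. The single point that must not be glossed over lies in the converse, where the reconstruction formulas of Dines' Theorem do not in general return a solution already normalized so that $x_{n+1}=1$; this is repaired by exploiting homogeneity and dividing through by the strictly positive last coordinate. This is precisely the form in which the result will be used, since a non-homogeneous system such as~(\ref{eq:system}) is brought into the present framework by absorbing its inhomogeneous right-hand side into the coefficient of an auxiliary variable constrained to equal $1$.
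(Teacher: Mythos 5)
Your proof is correct and follows essentially the same route as the paper: the forward direction is Dines' theorem applied directly, and the converse uses the reconstruction formulas followed by normalization of the last coordinate, which is positive since $b_{1,i}>0$ for $i\in I$, $b_{1,j}<0$ for $j\in J$ and all $\overline{x}_{i,j}>0$. Your division by $\widehat{x}_{n+1}$ is exactly the paper's division by $\sum_{i\in I}b_{1,i}\overline{x}_{i,n+1}$ (after placing $n+1\in J$), justified in both cases by homogeneity.
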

			
			\begin{proof}
				If $(\overline{x}_{1},\ldots,\overline{x}_{n},1)$ is a positive solution of the system (\ref{system:m}), then $(\overline{x}_{1},\ldots,\overline{x}_{n},1)$ also satisfies (\ref{eq:1}) and (\ref{eq:m-1}). Therefore, $x_{i,j}=\overline{x}_{i}\overline{x}_{j}$ for $i,j \in \{1,\ldots,n\}$ and $x_{i,n+1}=x_{n+1,i}=\overline{x}_{i}$ satisfy (\ref{system:m-1}).
				
				Conversely, let  $(\overline{x}_{i,j})$ be a positive solution of the system (\ref{system:m-1}). Without the condition on the value of $x_{n+1}$, a positive solution of (\ref{system:m}) is known from the above theorem, i.e.
				\begin{equation*}
				x_{i} = -\sum_{j \in J}b_{1,j}\overline{x}_{i,j} , \qquad {\rm and} \qquad x_{j}=\sum_{i \in I}b_{1,i}\overline{x}_{i,j} \,.
				\end{equation*}
				Without loss of generality, set $n+1 \in J$. Since $P \neq 0$, $Q \neq 0$ and $\overline{x}_{i,j} >0$, we have
				\begin{equation}
				\sum_{i \in I} b_{1,i}\overline{x}_{i,n+1} > 0.
				\label{sum:norm}
				\end{equation}
				Hence, we can "normalize" the given solution of the system (\ref{system:m}) by dividing the solution above by the sum (\ref{sum:norm})  as follows
				\begin{equation*}
				x_{n+1}  =  \frac{\sum_{i \in I}b_{1,i}\overline{x}_{i,n+1}}{\sum_{i \in I}b_{1,i}\overline{x}_{i,n+1}} \; = \; 1, \qquad
				x_{j}    =  \frac{\sum_{i \in I}b_{1,i}\overline{x}_{i,j}}{\sum_{i \in I}b_{1,i}\overline{x}_{i,n+1}} \qquad
				{\rm and} \qquad
				x_{i}    =  -\frac{\sum_{j \in J}b_{1,j}\overline{x}_{i,j}}{\sum_{i \in I}b_{1,i}\overline{x}_{i,n+1}} \, .
				\end{equation*}
				
			\end{proof}
			
			
			\section{Algorithm}
			
			First, we choose an arbitrary nonempty subset ${\cal S} \subseteq \{1,2,\ldots,n+1\}$ and write the system (\ref{eq:system}) in the homogeneous form
			\begin{equation}
			\sum_{s=1}^{n+2} b_{r,s}q_{s}^{2} = 0 \qquad ; \quad r=0,1,\ldots,n \, ,
			\label{eq:systemformal}
			\end{equation}
			where $b_{0,s}=\epsilon_{{\cal S},s}$ for $s=1,2,\ldots,n+1$ (see (\ref{eq:epsilon})), $b_{0,n+2}=-1$, $b_{r,s}=\epsilon_{{\cal S},s}\frac{A(\lambda_{r})}{\lambda_{r}-a_{s}}$ for $s=1,2,\ldots,n+1$ and $r=1,2,\ldots,n$, and $b_{r,n+2}=0$ for $r=1,2,\ldots,n$.

			\subsection{The basic concept}
			
			\noindent
			{\bf Set}  the positions of the parameters $\lambda_{1}<\lambda_{2}<...<\lambda_{n}$ with respect to the values \ $a_{1}<a_{2}<...<a_{n+1}$. \\
			
			\noindent
			{\bf Step 1.} Verify the existence of the solution (every equation contains both positive and negative coefficients). If there exists one equation with all positive or all negative coefficients, go to {\bf Step 3}. Else, if there is only one equation, go to {\bf Step 4}, else, go to {\bf Step 2}. \\
			
			\noindent
			{\bf Step 2.} Set $k$ as the number of equations. In the first equation, separate the values; the positive values on the left side, the negative
			values on the right side. In all other equations, separate the values by the indices as: on the left side stay indices which go on the right side in the first equation, on the right side go indices which stay on the left side in the first equation. Form a new linear homogeneous system as: \\
			a) the left side of the $j$-th equation is obtained by multiplying the left side of the first equation (in the original system) and the left side of the $(j+1)$-th equation (in the original system), the right side of the $j$-th equation is obtained by multiplying the right side of the first equation (in the original system) and the right side of the $(j+1)$-th equation (in the original system)
			\\
			b) arranging new equations into the homogeneous linear system, new variables are the products of the previous ones.
			\\
			
			\noindent
			{\bf Step 3.} The system does not have any positive solution. \\
			
			\noindent
			{\bf Step 4.} The system has a positive solution.
			
			\subsection{Example $n=2$ and ${\cal S}=\{1,3\}$}

			Let $a_{1} < a_{2} < a_{3}$ and $(q_{1},q_{2},q_{3})$  be coordinates in ${\mathbb R}^{3}$. Take ${\cal S}=\{1,3\}$ (see Example \ref{ex:11} and Example 2.1. in \cite{Novak}). We would like to characterize the roots of the polynomial
			\begin{equation*}
			U_{\{1,3\}}(\lambda) = (\lambda-a_1)(\lambda-a_2)(\lambda-a_3)\left(\frac{q_{1}^{2}}{\lambda-a_{1}} - \frac{q_{2}^{2}}{\lambda-a_{2}} + \frac{q_{3}^{2}}{\lambda-a_{3}}\right)
			\end{equation*}
			according to the constraint $q_{1}^{2}-q_{2}^{2}+q_{3}^{2}=1$. There are 10 possible positions of real roots $\lambda_{1}$, $\lambda_{2}$ (according to the ordered constants $a_{1}$, $a_{2}$, $a_{3}$) of the polynomial $U_{\{1,3\}}$:
			\begin{enumerate}[(i)]
				\item $\lambda_1<\lambda_2<a_1<a_2<a_3$, \label{ex:i}
				\item $\lambda_1<a_1<\lambda_2<a_2<a_3$,
				\item $\lambda_1<a_1<a_2<\lambda_2<a_3$,
				\item $\lambda_1<a_1<a_2<a_3<\lambda_2$,
				\item $a_1<\lambda_1<\lambda_2<a_2<a_3$,
				\item $a_1<\lambda_1<a_2<\lambda_2<a_3$,
				\item $a_1<\lambda_1<a_2<a_3<\lambda_2$,
				\item $a_1<a_2<\lambda_1<\lambda_2<a_3$,
				\item $a_1<a_2<\lambda_1<a_3<\lambda_2$,
				\item $a_1<a_2<a_3<\lambda_1<\lambda_2$.
			\end{enumerate}
			They dictate $10$ verifications of the solvability of the system
			\begin{eqnarray}
			q_{1}^{2} - q_{2}^{2} + q_{3}^{2} & = & 1  \nonumber \\
			(\lambda_1-a_2)(\lambda_1-a_3)q_{1}^{2} - (\lambda_1-a_1)(\lambda_1-a_3)q_{2}^{2} + (\lambda_1-a_1)(\lambda_1-a_2)q_{3}^{2} & = & 0  \label{system:1,3}\\
			(\lambda_2-a_2)(\lambda_2-a_3)q_{1}^{2} - (\lambda_2-a_1)(\lambda_2-a_3)q_{2}^{2} + (\lambda_2-a_1)(\lambda_2-a_2)q_{3}^{2} & = & 0  \,. \nonumber
			\end{eqnarray}
			The system (\ref{system:1,3}) is equivalent to the homogeneous system of four variables
			\begin{eqnarray}
			q_{1}^{2} - q_{2}^{2} + q_{3}^{2} - q_{4}^{2} & = & 0  \nonumber \\
			(\lambda_{1}-a_{2})(\lambda_{1}-a_{3})q_{1}^{2} - (\lambda_{1}-a_{1})(\lambda_{1}-a_{3})q_{2}^{2} + (\lambda_{1}-a_{1})(\lambda_{1}-a_{2})q_{3}^{2} + 0 \cdot q_{4}^{2} & = & 0 \label{system:1,3homo}\\
			(\lambda_{2}-a_{2})(\lambda_{2}-a_{3})q_{1}^{2} - (\lambda_{2}-a_{1})(\lambda_{2}-a_{3})q_{2}^{2} + (\lambda_{2}-a_{1})(\lambda_{2}-a_{2})q_{3}^{2} + 0 \cdot q_{4}^{2} & = & 0 \nonumber
			\end{eqnarray}
			with the requirement $q_{4}=1$. In the notation of the homogeneous system $\sum_{s=1}^{4} b_{r,s}q_{s}^{2} = 0$ ; \ $r=0,1,2$ (see (\ref{eq:systemformal})) we have
			$$\begin{array}{llll}
			b_{0,1} = 1 & \ \ \ b_{0,2} = -1 & \ \ \ b_{0,3} = 1 & \ \ \ b_{0,4} = -1 \\
			b_{1,1}=(\lambda_{1}-a_{2})(\lambda_{1}-a_{3})  & \ \ \ b_{1,2} = -(\lambda_{1}-a_{1})(\lambda_{1}-a_{3})  & \ \ \ b_{1,3} = (\lambda_{1}-a_{1})(\lambda_{1}-a_{2})  & \ \ \ b_{1,4} = 0  \\
			b_{2,1}=(\lambda_{2}-a_{2})(\lambda_{2}-a_{3})  & \ \ \ b_{2,2} = -(\lambda_{2}-a_{1})(\lambda_{2}-a_{3})  & \ \ \ b_{2,3} = (\lambda_{2}-a_{1})(\lambda_{2}-a_{2})  & \ \ \ b_{2,4} = 0 \,.
			\end{array}$$
			To illustrate the algorithm procedure take one of ten possibilities of the roots positions, for example case (\ref{ex:i}), i.e. $\lambda_{1} < \lambda_{2} < a_{1} < a_{2} < a_{3}$. Hence $b_{1,1}>0$, $b_{1,2}<0$, $b_{1,3}>0$ and $b_{2,1}>0$, $b_{2,2}<0$, $b_{2,3}>0$ and this (together with the obvious positive and negative coefficients in the first line) allows a positive solution of the system. We reorganize the first equation as proposed in (\ref{eq:1})
			\begin{eqnarray*}
				b_{0,1}q_{1}^{2} + b_{0,3}q_{3}^{2} & = & -b_{0,2}q_{2}^{2} - b_{0,4}q_{4}^{2},
			\end{eqnarray*}
			and with respect to the first equation the remaining two (see (\ref{eq:m-1})) are
			\begin{eqnarray*}
				b_{1,2}q_{2}^{2} + b_{1,4}q_{4}^{2} & = & -b_{1,1}q_{1}^{2} - b_{1,3}q_{3}^{2} \\
				b_{2,2}q_{2}^{2} + b_{2,4}q_{4}^{2} & = & -b_{2,1}q_{1}^{2} - b_{2,3}q_{3}^{2} \,.
			\end{eqnarray*}
			According to the notation in Section 2, indices of positive and negative coefficients are
			$I=\{i_{1},i_{2}\}=\{ 1,3 \}$, $J=\{j_{1},j_{2}\}=\{2,4\}$ and $P=Q=2$. By multiplying we get the new system of two linear equations (see (\ref{system:m-1})). Orderly, we can write
			\begin{eqnarray}
			(b_{0,1}b_{1,2}-b_{0,2}b_{1,1})q_{1}^{2}q_{2}^{2} + (b_{0,1}b_{1,4}-b_{0,4}b_{1,1})q_{1}^{2}q_{4}^{2} + (b_{0,3}b_{1,2}-b_{0,2}b_{1,3})q_{2}^{2}q_{3}^{2}+ &  & \nonumber \\
			+ (b_{0,3}b_{1,4}-b_{0,4}b_{1,3})q_{3}^{2}q_{4}^{2} & = & 0 \nonumber\\
			(b_{0,1}b_{2,2}-b_{0,2}b_{2,1})q_{1}^{2}q_{2}^{2} + (b_{0,1}b_{2,4}-b_{0,4}b_{2,1})q_{1}^{2}q_{4}^{2} + (b_{0,3}b_{2,2}-b_{0,2}b_{2,3})q_{2}^{2}q_{3}^{2}+ &  &  \label{system:(1)1}\\
			+ (b_{0,3}b_{2,4}-b_{0,4}b_{2,3})q_{3}^{2}q_{4}^{2} & = & 0 \nonumber
			\end{eqnarray}
			We check the signs of coefficients of the first equation (taking into account the order of $\lambda_{1}$, $\lambda_{2}$, $a_{1}$, $a_{2}$, $a_{3}$):
			\begin{eqnarray*}
				b_{0,1}b_{1,2}-b_{0,2}b_{1,1} & = & 1 \cdot(-(\lambda_{1}-a_{1})(\lambda_{1}-a_{3}))+1\cdot(\lambda_{1}-a_{2})(\lambda_{1}-a_{3}) = \\
				& = & (\lambda_{1}-a_{3})(a_{1}-a_{2})
			\end{eqnarray*}
			therefore $\sgn(b_{0,1}b_{1,2}-b_{0,2}b_{1,1}) = 1$,
			\begin{eqnarray*}
				b_{0,1}b_{1,4}-b_{0,4}b_{1,1} & = & 0 - (-1)(\lambda_{1}-a_{2})(\lambda-a_{3}) = (\lambda_{1}-a_{2})(\lambda_{1}-a_{3})
			\end{eqnarray*}
			and therefore $\sgn(b_{0,1}b_{1,4}-b_{0,4}b_{1,1}) = 1$,
			\begin{eqnarray*}
				b_{0,3}b_{1,2}-b_{0,2}b_{1,3} & = & -(\lambda_{1}-a_{1})(\lambda_{1}-a_{3})+(\lambda_{1}-a_{1})(\lambda_{1}-a_{2}) = \\
				& = & (\lambda_{1}-a_{1})(-\lambda_{1}+a_{3}+\lambda_{1}-a_{2}) \;\; = \;\; (\lambda_{1}-a_{1})(a_{3}-a_{2})
			\end{eqnarray*}
			thus $\sgn(b_{0,3}b_{1,2}-b_{0,2}b_{1,3}) = -1$, and
			\begin{eqnarray*}
				b_{0,3}b_{1,4}-b_{0,4}b_{1,3} & = & 0 + (\lambda_{1}-a_{1})(\lambda_{1}-a_{2}) \;\; = \;\; (\lambda_{1}-a_{1})(\lambda_{1}-a_{2})
			\end{eqnarray*}
			and it follows $\sgn(b_{0,3}b_{1,4}-b_{0,4}b_{1,3})=1$.
			
			Similarly, we can calculate the signs of the coefficients of the second equation. We realize that the solution of the system may occur.
			We use the triple indexation (see \cite{Dines} and (\ref{eq:tripledoubleindex}))
			\begin{equation*}
			b_{r,i_{p},j_{q}} = b_{0,i_{p}}b_{r,j_{q}}-b_{0,j_{q}}b_{r,i_{p}} \quad {\rm for} \; r=1,2  \,.
			\end{equation*}
			Hence, the triple indexation can be switch to the double indexation as follows
			\begin{equation*}
			b_{r,i_{p},j_{q}} = b^{(1)}_{r-1,(q-1)P+p} \qquad {\rm and} \qquad
			q_{i}^{2}q_{j}^{2} = q_{i,j}^{2} = q_{i_{p},j_{q}}^{2} \,.
			\end{equation*}
			As above (for the coefficients) the double indexation in variables can become single by
			\begin{equation*}
			q_{i_{p},j_{p}}^{2} = q_{(q-1)P+p}^{2} \,.
			\end{equation*}
			Then, the system (\ref{system:(1)1}) can be simply written as
			\begin{eqnarray*}
				b_{0,1}^{(1)}q_{1}^{2} + b_{0,2}^{(1)}q_{2}^{2} + b_{0,3}^{(1)}q_{3}^{2} + b_{0,4}^{(1)}q_{4}^{2} & = & 0 \nonumber\\
				b_{1,1}^{(1)}q_{1}^{2} + b_{1,2}^{(1)}q_{2}^{2} + b_{1,3}^{(1)}q_{3}^{2} + b_{1,4}^{(1)}q_{4}^{2} & = & 0 \,.
			\end{eqnarray*}
			Thus, the procedure has to be repeated for the smaller system of equations.
			With respect to the signs of the coefficient the first equation has to be written as
			\begin{equation*}
			b_{0,1}^{(1)}q_{1}^{2} + b_{0,2}^{(1)}q_{2}^{2} + b_{0,4}^{(1)}q_{4}^{2} = -b_{0,3}^{(1)}q_{3}^{2}
			\end{equation*}
			and accordingly the second equation as
			\begin{equation*}
			b_{1,3}^{(1)}q_{3}^{2} = -b_{1,1}^{(1)}q_{1}^{2} - b_{1,2}^{(1)}q_{2}^{2} - b_{1,4}^{(1)}q_{4}^{2} \,.
			\end{equation*}
			By multiplying (left side with left side and right side with right side) and reordering, we obtain the last linear equation in the procedure
			\begin{equation*}
			\left(b_{0,1}^{(1)}b_{1,3}^{(1)}-b_{0,3}^{(1)}b_{1,1}^{(1)}\right)q_{1}^{2}q_{3}^{2} + \left(b_{0,2}^{(1)}b_{1,3}^{(1)}-b_{0,3}^{(1)}b_{1,2}^{(1)}\right)q_{2}^{2}q_{3}^{2} + \left(b_{0,4}^{(1)}b_{1,3}^{(1)}-b_{0,3}^{(1)}b_{1,4}^{(1)}\right)q_{3}^{2}q_{4}^{2} \; = \; 0 \,.
			\end{equation*}
			By straightforward but tedious calculation the first and the third coefficient can be factorized\footnote{in Mathematica one can use the functions Assume and Factor:\\ Assume[$\lambda_1<\lambda_2<a_1<a_2<a_3$, Factor[$(-(\lambda_1-a_1)(\lambda_1-a_3)+(\lambda_1-a_2)(\lambda_1-a_3))(-(\lambda_2-a_1)(\lambda_2-a_3)+(\lambda_2-a_1)(\lambda_2-a2))-(-(\lambda_1-a_1)(\lambda_1-a_3)+(\lambda_1-a_1)(\lambda_1-a_2))(-(\lambda_2-a_1)(\lambda_2-a_3)+(\lambda_2-a_2)(\lambda_2-a_3))$]]}\textsuperscript{,}\footnote{In MATLAB: assume($\lambda_1<\lambda_2<a_1<a_2<a_3$); factor($(-(\lambda_1-a_1)(\lambda_1-a_3)+(\lambda_1-a_2)(\lambda_1-a_3))(-(\lambda_2-a_1)(\lambda_2-a_3)+(\lambda_2-a_1)(\lambda_2-a2))-(-(\lambda_1-a_1)(\lambda_1-a_3)+(\lambda_1-a_1)(\lambda_1-a_2))(-(\lambda_2-a_1)(\lambda_2-a_3)+(\lambda_2-a_2)(\lambda_2-a_3))$)} as
			\begin{equation*}
			b_{0,1}^{(´1)}b_{1,3}^{(1)}-b_{0,3}^{(1)}b_{11}^{(1)} = (a_{2}-a_{1})(a_{3}-a_{1})(a_{3}-a_{2})(\lambda_{2}-\lambda_{1})
			\end{equation*}
			and
			\begin{equation*}
			b_{0,4}^{(1)}b_{1,3}^{(1)} - b_{0,3}^{(1)}b_{1,4}^{(1)} = -(a_{3}-a_{2})(a_{1}-\lambda_{1})(a_{1}-\lambda_{2})(\lambda_{2}-\lambda_{1})\,.
			\end{equation*}
			Thus, the signs of both coefficients are
			\begin{equation*}
			\sgn(b_{0,1}^{(´1)}b_{1,3}^{(1)}-b_{0,3}^{(1)}b_{1,1}^{(1)}) = -1
			\end{equation*}
			and
			\begin{equation*}
			\sgn(b_{0,4}^{(1)}b_{1,3}^{(1)} - b_{0,3}^{(1)}b_{1,4}^{(1)})=1,
			\end{equation*}
			and hence a positive solution of system (\ref{system:1,3}) exists.
			
			Since all these tedious calculations are time-consuming, especially for large systems, the need for use of computer computation is obvious. For writing a code in MATLAB, matrix forms of the systems are needed.
			
			\subsection{Matrix form of algorithm}
			
			For reasons of transparency, it would make sense to write the system of equations (\ref{system:m}) (or (\ref{eq:systemformal}) for an appropriate number of indices) in the matrix form. To a system of $m$ homogeneous linear equations (\ref{system:m}) corresponds the matrix of the system
			\begin{equation*}
			B = [ b_{r,s}]_{\substack{r=1,2,...,m \\s=1,2,...,n+1}} \,.
			\end{equation*}
			In this discussion, coefficients $b_{r,s}$ can be arbitrary real numbers.
			According to the separation of equations (see (\ref{eq:1}) and (\ref{eq:m-1})) four submatrices of dimensions $1 \times P$, $1 \times Q$, $(m-1) \times P$ and $(m-1) \times Q$ are defined respectively as
			\begin{eqnarray}
			C_{11}  & := & [b_{1,i_{k}}]_{k=1,2,...,P}
			\label{eq:C11} \\
			C_{12}  & := & [b_{1,j_{k}}]_{k=1,2,...,Q}
			\label{eq:C12} \\
			C_{21}  & := & [b_{r,i_{k}}]_{\substack{r=2,3,...,m \\ k=1,2,...,P}}
			\label{eq:C21} \\
			C_{22}  & := & [b_{r,j_{k}}]_{\substack{r=2,3,...,m \\ k=1,2,...,Q}}
			\label{eq:C22}
			\end{eqnarray}
			Define two operations on the set of $m \times t$ matrix $M^{m,t}$. The first one returns the $i$-th row of a matrix
			\begin{eqnarray*}
				R_{i} : \qquad \,\,  M^{m,t} \quad & \longrightarrow & M^{1,t} \\
				R_{i} : [a_{r,s}]_{\substack{r=1,2,...,m \\ s=1,2,...,t}} & \longmapsto & [a_{i,s}]_{s=1,2,...,t} \, .
			\end{eqnarray*}
			The second one converts the matrix into a column vector (vectorization) as
			\begin{eqnarray*}
				\mtov : \qquad \qquad M^{m,t} & \longrightarrow & M^{mt,1} \\
				\mtov :   \;\;\;\; [a_{r,s}]_{\substack{r=1,2,...,m \\ s=1,2,...,t}} & \longmapsto & [a_{1,1},a_{2,1},\ldots,a_{m,1},a_{1,2},a_{2,2},\ldots,a_{m,2},\ldots,a_{1,t},a_{2,t},\ldots,a_{m,t}]^{T} \,.
			\end{eqnarray*}
			
			\begin{proposition}
				\label{prop:reducedmatrix}
				The matrix of the reduced system (\ref{system:m-1}) of the main system (\ref{system:m}) is the $(m-1) \times PQ$ matrix $D$, which rows are
				\begin{equation*}
				R_{i}(D) = \mtov \left(C_{11}^{T} \cdot R_{i}(C_{22})-(R_{i}(C_{21}))^{T} \cdot C_{12} \right)
				\end{equation*}
				for $i=1,2,\ldots,m-1$.
			\end{proposition}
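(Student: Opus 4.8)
The plan is to verify directly that the matrix $D$ described in the statement reproduces, row by row, the coefficients of the reduced system (\ref{system:m-1}) under an explicit single-index enumeration of its $PQ$ variables $x_{i,j}=x_ix_j$ (with $i\in I$, $j\in J$). No new idea is needed beyond the reduction already carried out in passing from (\ref{eq:1})--(\ref{eq:m-1}) to (\ref{system:m-1}); the work is entirely a matching of indexing conventions.

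First I would fix the bookkeeping. I enumerate the pairs $(i_p,j_q)$, $p=1,\dots,P$, $q=1,\dots,Q$, by the single index $\ell=(q-1)P+p$ --- which is precisely the column-major order in which $\mtov$ stacks the columns of a $P\times Q$ matrix, and the same convention already used in the $n=2$ example. With this convention the $\ell$-th variable of (\ref{system:m-1}) is $x_{i_p}x_{j_q}$, so the claim reduces to showing that the $\ell$-th entry of $R_i(D)$ equals the coefficient $b_{r,i_p,j_q}$ of that variable in the $r$-th equation of the reduced system, where $r=i+1$.

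Next I would carry out the short matrix computation. For fixed $i\in\{1,\dots,m-1\}$ one has $R_i(C_{22})=[\,b_{i+1,j_1},\dots,b_{i+1,j_Q}\,]$ and $C_{11}^{T}=[\,b_{1,i_1},\dots,b_{1,i_P}\,]^{T}$, so $C_{11}^{T}\cdot R_i(C_{22})$ is the $P\times Q$ matrix with $(p,q)$-entry $b_{1,i_p}\,b_{i+1,j_q}$; likewise $(R_i(C_{21}))^{T}\cdot C_{12}$ is the $P\times Q$ matrix with $(p,q)$-entry $b_{i+1,i_p}\,b_{1,j_q}$. Subtracting, the $(p,q)$-entry of $C_{11}^{T}\cdot R_i(C_{22})-(R_i(C_{21}))^{T}\cdot C_{12}$ is $b_{1,i_p}b_{i+1,j_q}-b_{1,j_q}b_{i+1,i_p}$, which by the definition (\ref{eq:tripledoubleindex}) of the triple-indexed coefficients is exactly $b_{i+1,i_p,j_q}$. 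Applying $\mtov$ places this entry in position $(q-1)P+p=\ell$, matching the variable enumeration fixed above; hence $R_i(D)$ is the coefficient row of the $(i+1)$-st equation of (\ref{system:m-1}). Since $i$ was arbitrary, $D$ is the matrix of the reduced system, and its stated dimension $(m-1)\times PQ$ is immediate.

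The computation is a one-line bilinear expansion, so the only genuine point of care --- the step I would treat as the obstacle --- is making the two index conventions line up: that the column-major flattening performed by $\mtov$ agrees with the order in which the products $x_ix_j$ are taken as the variables of (\ref{system:m-1}). Once this is pinned down through the index $\ell=(q-1)P+p$, the rest follows. I would also note that the sign hypothesis distinguishing $I$ from $J$ (positivity, resp. negativity, of the $b_{1,s}$) is not used anywhere here: Proposition \ref{prop:reducedmatrix} is a purely algebraic identity describing how one step of the reduction acts on the coefficient matrix, valid for an arbitrary partition of the column indices and arbitrary real $b_{r,s}$, consistent with the remark preceding the statement.
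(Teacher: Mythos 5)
Your proof is correct and follows essentially the same route as the paper's: a direct expansion of the two outer products $C_{11}^{T}\cdot R_i(C_{22})$ and $(R_i(C_{21}))^{T}\cdot C_{12}$, identification of the $(p,q)$-entry of the difference with the coefficient $b_{1,i_p}b_{r,j_q}-b_{1,j_q}b_{r,i_p}$ from (\ref{eq:tripledoubleindex}), and vectorization to recover the rows of (\ref{system:m-1}). You are in fact slightly more careful than the paper about the row offset $r=i+1$ and the column-major enumeration $\ell=(q-1)P+p$, but the argument is the same.
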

			\begin{proof}
				By direct calculation we obtain \\
				
				$\displaystyle{C_{11}^{T} \cdot R_{i}(C_{22})-(R_{i}(C_{21}))^{T} \cdot C_{12}} = $
				\begin{eqnarray*}
					& = &
					\left[\begin{matrix}
						b_{1,i_{1}} \\ b_{1,i_{2}} \\ \vdots \\ b_{1,i_{P}}
					\end{matrix}\right]
					\cdot
					\left[\begin{matrix}
						b_{i,j_{1}} & b_{i,j_{2}} & \ldots & b_{i,j_{Q}}
					\end{matrix}\right]
					-
					\left[\begin{matrix}
						b_{i,i_{1}} \\ b_{i,i_{2}} \\ \vdots \\ b_{i,i_{P}}
					\end{matrix}\right]
					\cdot
					\left[\begin{matrix}
						b_{1,j_{1}} & b_{1,j_{2}} & \ldots & b_{1,j_{Q}}
					\end{matrix}\right]
					\; = \\
					& = &
					\left[\begin{matrix}
						b_{1,i_{1}}b_{i,j_{1}}  &  b_{1,i_{1}}b_{i,j_{2}}  &  \ldots  &  b_{1,i_{1}}b_{i,j_{Q}} \\
						b_{1,i_{2}}b_{i,j_{1}}  &  b_{1,i_{2}}b_{i,j_{2}}  &  \ldots  &  b_{1,i_{2}}b_{i,j_{Q}} \\
						\vdots               &    \vdots                &  \ddots  &  \vdots                 \\
						b_{1,i_{P}}b_{i,j_{1}}  &  b_{1,i_{P}}b_{i,j_{2}}  &  \ldots  &  b_{1,i_{P}}b_{i,j_{Q}} \\
					\end{matrix}\right]
					-
					\left[\begin{matrix}
						b_{i,i_{1}}b_{1,j_{1}}  &  b_{i,i_{1}}b_{1,j_{2}}  &  \ldots  &  b_{i,i_{1}}b_{1,j_{Q}} \\
						b_{i,i_{2}}b_{1,j_{1}}  &  b_{i,i_{2}}b_{1,j_{2}}  &  \ldots  &  b_{i,i_{2}}b_{1,j_{Q}} \\
						\vdots               &    \vdots                &  \ddots  &  \vdots                 \\
						b_{i,i_{P}}b_{1,j_{1}}  &  b_{i,i_{P}}b_{1,j_{2}}  &  \ldots  &  b_{i,i_{P}}b_{1,j_{Q}} \\
					\end{matrix}\right].
				\end{eqnarray*}
				Therefore, all elements in the $(m-1) \times PQ$ matrix $\mtov \left(C_{11}^{T} \cdot R_{i}(C_{22})-(R_{i}(C_{21}))^{T} \cdot C_{12} \right)$ are in the form
				\begin{equation*}
				b_{1,i_{p}}b_{i,j_{q}} - b_{i,i_{p}}b_{1,j_{q}}
				\end{equation*}
				for $p \in \{1,2,\ldots,P\}$ and $q \in \{1,2,\ldots,Q\}$. These elements are exactly the coefficients of the reduced system (\ref{system:m-1}).
			\end{proof}

			\subsection{Algorithm for $n=2$}
			
			\label{subsection:alg n2}
			
			At the beginning of this section, the course of the algorithm for checking the existence of a positive solution of the system (\ref{eq:systemformal}) is given. Since the signs of coefficients of some reduced systems can not be easily determined in general (see examples in \ref{appendix:analysis}), we restricted our investigation to the low dimension $n=2$ and all polynomials $U_{\cal S}(\lambda)$ for nonempty subset ${\cal S} \subseteq \{1,2,3\}$. For coding the algorithm we use MATLAB. To recording the matrix of the main system for all nonempty subsets ${\cal S}$ and all possible positions of roots $\lambda_{1}$ and $\lambda_{2}$, according to the given values $a_{1}$, $a_{2}$ and $a_{3}$, $70$ scripts are prepared. Every script is denoted by $\rm{Snum_{1}Lnum_{2}}$, where
			
			$\bullet$ $\rm{num_{1}}$ denotes the subset ${\cal S}$ as: $\rm{num_{1}}=1$ determines $\{1\}$, $\rm{num_{1}}=12$ determines $\{1,2\}$, etc.
			
			$\bullet$ $\rm{num_{2}}$ denotes the positions of roots $\lambda_{1}$ and $\lambda_{2}$ respectively: $0$ means that a root lies in $(-\infty,a_{1})$,
			the
			
			\hspace{0.15cm} value $1$ means that a root lies in $(a_{1},a_{2})$, etc. (in MATLAB, roots $\lambda_{1}$ and $\lambda_{2}$ are denoted by $l_{1}$ and $l_{2}$).
			
			\noindent
			Every script determines two matrices, System and SignsSystem. The matrix System is the matrix of system (\ref{system:1,3homo}) and SignSystem is the matrix of signs of the elements in System, respectively. For the visualization, see \ref{appendix:scripts and fncs}, Script ${\rm S1L00}$ and Script ${\rm S13L12}$.
			
			The algorithm contains two essential functions. The first one is {\ttfamily{eops}} (existence of a positive solution). Its inputs are the matrix $B = \rm{SignsSystem}$ and the integer $n=$ ((the number of equations)$-1$). Its output is a vector $D$, which components are
			
			$\bullet$ the first component is $1$ and the second component is an empty vector if a positive solution exists. In
			
			\hspace{0.15cm} MATLAB, this is equal to value 1.
			
			$\bullet$ the first component is $0$ and the second component is the row of the matrix $B=\rm{SignsSystem}$, therefore
			
			\hspace{0.15cm} the system does not allow a positive solution.
			
			\noindent
			The second one is {\ttfamily{newsys}} function. Its inputs are submatrices of the matrix of the system, i.e. matrices (\ref{eq:C11}), (\ref{eq:C12}), (\ref{eq:C21}) and (\ref{eq:C22}), respectively. By the {\bf for} loop function {\ttfamily{newsys}} constructs the matrix of the reduced system as is presented in Proposition \ref{prop:reducedmatrix}. The output of {\ttfamily{newsys}} function is the matrix of the reduced system {\ttfamily{Sys}} and the matrix of signs {\ttfamily{B}} of elements in {\ttfamily{Sys}}, respectively.
			
			For all 70 scripts, the algorithm confirms or denies the existence of a positive solution of the inputted system. All results are presented in \ref{appendix:results of alg} (Table \ref{Tab:1}, Table \ref{Tab:2} and Table \ref{Tab:3}). There, each subtable contains three columns. In the first column, there are the names of the scripts. In the $Y/N$ column, 1 declares the existence of a positive solution, and 0 declares the non-existence of a positive solution. The third column $step$ tells us in which step of the algorithm the existence of a positive solution is denied. Additional $pf$ along with the number means that after running the algorithm it returns a row of $B=SignsSys$ which values of elements ($1$ and $-1$) can not be determined by a factorization. The elements remain expressions in variables $\lambda_{1}$, $\lambda_{2}$, $a_{1}$, $a_{2}$ and $a_{3}$. In \ref{appendix:analysis}, all these examples are further analyzed. After checking, by exhaustive calculations, it is confirmed that all examples can be partially factorized and the signs of expressions can be determined.

			\section{Conclusion}
			
			The designed algorithm confirms the known facts about the roots of the polynomials $U_{\cal S}(\lambda)$ for all nonempty sets ${\cal S} \subseteq\{1,2,3\}$ and ${\cal S} \neq \{1,3\}$. For ${\cal S}=\{1,3\}$, a new result is obtained. We realize that the polynomial $U_{\{1,3\}}(\lambda)$ has real roots for some $(q_{1},q_{2},q_{3}) \in {\cal H}_{\{1,3\}}^{2}$. Moreover, the new results of this work include possible positions of roots according to the values $a_{1}$, $a_{2}$ and $a_{3}$ (see Table \ref{Tab:2}, the subtable of subset $\{1,3\}$). \  More precisely, we show that there are points in the hyperboloid ${\cal H}_{\{1,3\}}^{2}$ at which the polynomial $U_{\{1,3\}}(\lambda)$ has real roots :
			\begin{enumerate}[(a)]
				\item for some $(q_{1},q_{2},q_{3}) \in {\cal H}_{\{1,3\}}^{2}$ polynomial $U_{\{1,3\}}(\lambda)$ has real roots $\lambda_1, \lambda_2 \in (-\infty, a_{1})$,
				\item for some $(q_{1},q_{2},q_{3}) \in {\cal H}_{\{1,3\}}^{2}$ polynomial $U_{\{1,3\}}(\lambda)$ has real roots $\lambda_1, \lambda_2 \in (a_{1}, a_{2})$,
				\item for some $(q_{1},q_{2},q_{3}) \in {\cal H}_{\{1,3\}}^{2}$ it has real roots $\lambda_1, \lambda_2 \in (a_{2}, a_{3})$ and
				\item for some $(q_{1},q_{2},q_{3}) \in {\cal H}_{\{1,3\}}^{2}$ it has real roots $\lambda_1, \lambda_2 \in (a_{3}, \infty)$.
			\end{enumerate}
			
			
			\noindent
			{\bf The topology of Arnold-Liouville level sets.}
			Following \cite{Mumford, Novak, Vanhaecke},
			let $$f_{\cal S}(\lambda) = U_{\cal S}(\lambda)W_{\cal S}(\lambda)+V_{\cal S}^{2}(\lambda) = -(\lambda-a_1)(\lambda-a_2)(\lambda-a_3)(\lambda-b_1)(\lambda-b_2)$$
			be a characteristic polynomial; i.e. the real form of $f^{\mathbb C}(\lambda)=U^{\mathbb C}(\lambda)W^{\mathbb C}(\lambda)+(V^{\mathbb C}(\lambda))$ according to a subset ${\cal S} \subseteq \{1,2,3\}$ and the automorphism $\tau_{\cal S}$ (see (\ref{eq:tau})). The topology of an Arnold-Liouville level set is determined by the position of roots $\lambda_{1}$, $\lambda_2$ of polynomial $U_{\cal S}(\lambda)$ with respect to the positions of the roots $a_{1},a_{2},a_{3},b_{1},b_{2}$ of polynomial $f_{\cal S}(\lambda)$. As we here consider only the example ${\cal S}=\{1,3\}$, let $f(\lambda)=f_{\{1,3\}}(\lambda)$.
			\\
			
			In all four cases below, the following properties are used
			\begin{equation}
			f(\lambda_1) > 0 \qquad \hbox{and} \qquad f(\lambda_2) > 0,
			\label{eq:prop1}
			\end{equation}
			\begin{equation}
			f(a_1)=f(a_2)=f(a_3)=0
			\label{eq:prop2}
			\end{equation}
			and
			\begin{equation}
			f(\lambda)<0 \quad \hbox{for} \;\; \lambda >\!\!>0 \qquad \hbox{and} \qquad f(\lambda)>0 \quad \hbox{for} \;\; \lambda <\!\!<0 \,.
			\label{eq:prop3}
			\end{equation}
			\noindent
			{\bf Case (a), $\lambda_{1},\lambda_{2} \in (-\infty,a_1)$:} The possibilities of the positions of the roots $b_1$ and $b_2$ of the polynomial $f(\lambda)$  are the following
			\begin{enumerate}[(i)]
				\item $b_1$ and $b_2$ are complex roots. Therefore
				$ f(\lambda) = -(\lambda-a_1)(\lambda-a_2)(\lambda-a_3)(\lambda^2+\beta^2) $
				for some $\beta \in {\mathbb R}$. The roots of the polynomials $U_{\{1,3\}}(\lambda)$ and $f(\lambda)$ are on positions as in Figure \ref{figA1}.
				\begin{figure}[h!]
					\centering
					\includegraphics[scale=0.6]{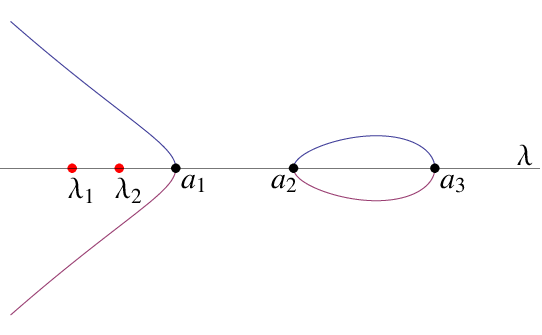}
					\caption{Real part of the spectral curve $\mu^{2}=f^{\mathbb C}(\lambda)$ in regard to the involution $\tau_{\{1,3\}}$ and case $\lambda_{1},\lambda_{2} \in (-\infty,a_1)$ and $b_1,b_2 \in {\mathbb C}$}
					\label{figA1}
				\end{figure}
				\item $b_1,b_2 \in (-\infty,\lambda_1)$. The roots $\lambda_1$ and $\lambda_2$ stay in the same interval $(b_2,a_1)$ (see Figure \ref{figA2}).
				\begin{figure}[h!]
					\centering
					\includegraphics[scale=0.6]{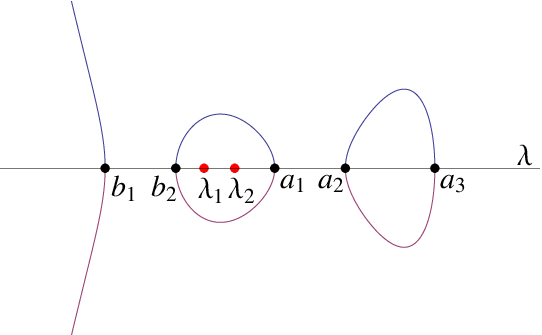}
					\caption{Real part of the spectral curve $\mu^{2}=f^{\mathbb C}(\lambda)$ in regard to the involution $\tau_{\{1,3\}}$ and case $\lambda_{1},\lambda_{2} \in (-\infty,a_1)$ and $b_1,b_2 \in (-\infty,\lambda_1)$}
					\label{figA2}
				\end{figure}
			\end{enumerate}
			\newpage
			\begin{enumerate}
				\item[(iii)] $b_1,b_2 \in (\lambda_1,\lambda_2)$. The roots $\lambda_1$ and $\lambda_2$ lie in different intervals, in $(-\infty,b_1)$ and in $(b_2,a_1)$ respectively (see Figure \ref{figA3}).
				\begin{figure}[h!]
					\centering
					\includegraphics[scale=0.6]{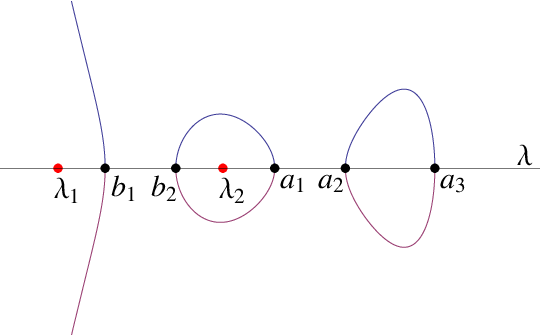}
					\caption{Real part of the spectral curve $\mu^{2}=f^{\mathbb C}(\lambda)$ in regard to the involution $\tau_{\{1,3\}}$ and case $\lambda_{1},\lambda_{2} \in (-\infty,a_1)$ and $b_1,b_2 \in (\lambda_1,\lambda_2)$}
					\label{figA3}
				\end{figure}
				\item[(iv)] $b_1,b_2 \in (\lambda_2,\infty)$. The roots $\lambda_1$ and $\lambda_2$ stay in the same interval, either in $(-\infty,b_1)$ or in $(-\infty,a_1)$ (see Figure \ref{figA4}, where $c_1,c_2,c_3,c_4$ and $c_5$ are ordered roots of polynomial $f(\lambda)$).
				\begin{figure}[h!]
					\centering
					\includegraphics[scale=0.6]{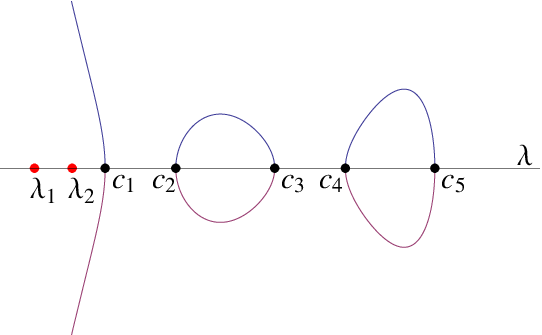}
					\caption{Real part of the spectral curve $\mu^{2}=f^{\mathbb C}(\lambda)$ in regard to the involution $\tau_{\{1,3\}}$ and case $\lambda_{1},\lambda_{2} \in (-\infty,a_1)$ and $b_1,b_2 \in (\lambda_2,\infty)$}
					\label{figA4}
				\end{figure}
			\end{enumerate}
			
			\noindent
			{\bf Case (b), $\lambda_{1},\lambda_{2} \in (a_1,a_2)$:} \ The roots $b_{1}$ and $b_{2}$ of the polynomial $f(\lambda)$ have to be real and $b_{1} \in (-\infty,\lambda_1)$ and $b_2 \in (\lambda_2,\infty)$. Thus the roots $\lambda_1$ and $\lambda_{2}$ lie in the common closed interval $(a_1,a_2)$ (see Figure \ref{fig01}, where $c_{1}$, $c_{2}$, $c_{3}$, $c_{4}$ and $c_{5}$ are ordered roots of polynomial $f(\lambda)$).
			\begin{figure}[h]
				\centering
				\includegraphics[scale=0.6]{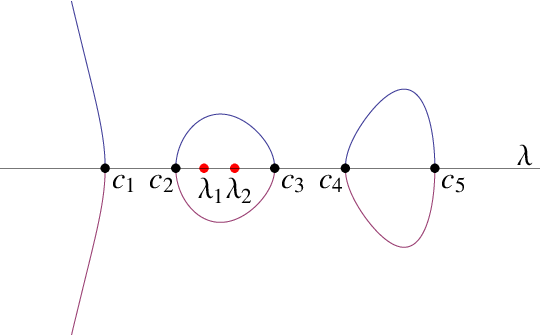}
				\caption{Real part of the spectral curve $\mu^{2}=f^{\mathbb C}(\lambda)$ in regard to the involution $\tau_{\{1,3\}}$ and case $\lambda_{1},\lambda_{2} \in (a_1,a_2)$}
				\label{fig01}
			\end{figure}

			\noindent
			{\bf Case (c), $\lambda_1,\lambda_2 \in (a_2,a_3)$:}  The possibilities of the positions of the roots $b_1$ and $b_2$ of the polynomial $f(\lambda)$ are the following
			\begin{enumerate}[(i)]
				\item $b_1$ and $b_2$ are complex roots and $f(\lambda)=-(\lambda-a_1)(\lambda-a_2)(\lambda-a_3)(\lambda^2+\beta^2)$ for some $\beta \in {\mathbb R}$. The real part of the spectral curve is the same as in Figure \ref{figA1}, only the roots $\lambda_1$ and $\lambda_2$ lie in $(a_2,a_3)$.
				\item $b_1,b_2 \in (-\infty,\lambda_1)$. The real part of the spectral curve is the same as in Figure \ref{figA4} ($c_1,c_2,c_3,c_4$ and $c_5$ are ordered roots of polynomial $f(\lambda)$ for this case), only the roots $\lambda_1$ and $\lambda_2$ lie in $(c_4,c_5)$.
				\item $b_1,b_2 \in (\lambda_1,\lambda_2)$. The roots $\lambda_1$ and $\lambda_2$ lie in different intervals, in $(a_2,b_1)$ and $(b_2,a_3)$ respectively (see Figure \ref{figC3}).
				\begin{figure}[h]
					\centering
					\includegraphics[scale=0.6]{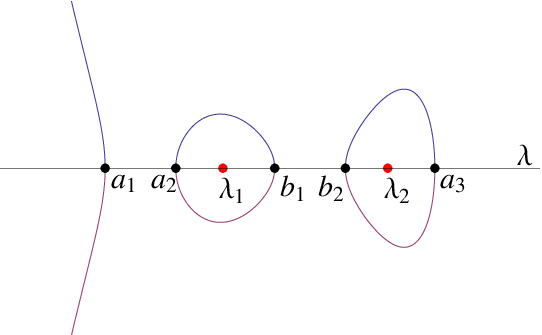}
					\caption{Real part of the spectral curve $\mu^{2}=f^{\mathbb C}(\lambda)$ in regard to the involution $\tau_{\{1,3\}}$ and case $\lambda_{1},\lambda_{2} \in (a_2,a_3)$ and $b_1,b_2 \in (\lambda_1,\lambda_2)$}
					\label{figC3}
				\end{figure}
				\item $b_1,b_2 \in (\lambda_2,\infty)$. The roots $\lambda_1$ and $\lambda_2$ stay in the same interval, either in $(a_2,a_3)$ or in $(a_2,b_1)$. The real part of the spectral curve is the same as in Figure \ref{fig01} ($c_1,c_2,c_3,c_4$ and $c_5$ are ordered roots of polynomial $f(\lambda)$ for this case) and $\lambda_1,\lambda_2 \in (c_2,c_3)$.
			\end{enumerate}
			
			\noindent
			{\bf Case (d), $\lambda_1,\lambda_2 \in (a_3,\infty)$:} The roots $b_{1}$ and $b_{2}$ of the polynomial $f(\lambda)$ have to be real and $b_{1} \in (-\infty,\lambda_{1})$ and $b_{2} \in (\lambda_{2},\infty)$. Thus the roots $\lambda_1$ and $\lambda_{2}$ lie in the common closed interval $(a_3,b_2)$ (see Figure \ref{fig02}, where $c_1,c_2,c_3,c_4$ and $c_5$ are ordered roots of polynomial $f(\lambda)$).
			\begin{figure}[h]
				\centering
				\includegraphics[scale=0.6]{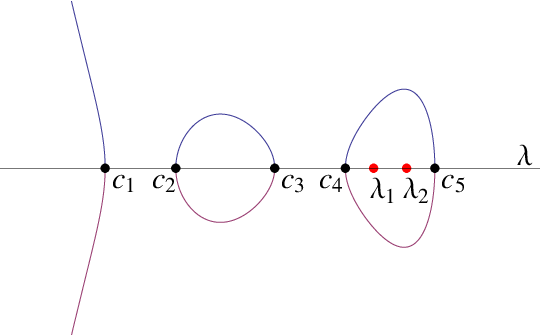}
				\caption{Real part of the spectral curve $\mu^{2}=f^{\mathbb C}(\lambda)$ in regard to the involution $\tau_{\{1,3\}}$ and case $\lambda_{1},\lambda_{2} \in (a_3,\infty)$ }
				\label{fig02}
			\end{figure}
			
			Taking into account the explanation of the topology of the level manifolds in \cite{Vanhaecke} (pages 86-89) the results above can be summarized. In the case (a)(iii) the Arnold-Liouville level set is isomorphic to a cylinder, in the case (c)(iii) the Arnold-Liouville level set is isomorphic to a torus. In all other cases the Arnold-Liouville level set is isomorphic to a disc with one hole.
			
			But there are still some outstanding issues. How does the position of roots determine a subset in ${\cal H}_{\cal S}^{2}$ and vice versa? Can we answer the questions asked in Introduction for some other case or in general? {\color{black} One can find the positive solution$(q_1^2,q_2^2,...,g_{n+1}^2)$ of the linear system and investigate correspondence between the position of roots $\lambda_1,\lambda_2,\ldots,\lambda_n$ and the separated part of the Hamiltonian system's phase space.} For a further study, the authors propose to consider the case $n=3$ first, to find out if there is any possibility to determine the signs of coefficients in a systems matrices. And perhaps to perceive whether the determination of the signs can be generalized to systems of arbitrary dimensions.

			\section*{Acknowledgments}
			
			The work was supported in part by ARRS, Research Agency of Slovenia (Grant number P2-0248).
			

			\appendix
			
			\section{Scripts and Functions in MATLAB}
			
			\label{appendix:scripts and fncs}
			
			
			Bellow, two scripts are presented. Scripts differ only in the fifth and sixth lines. In fifth line, a nonempty subset ${\cal S} \subseteq \{1,2,3\}$ is determined, and in sixth line, the positions of roots $\lambda_1$ and $\lambda_{2}$ (in MATLAB $l_1$ and $l_2$) are fixed. After calling a script the matrix System is the matrix of coefficients of the system (\ref{eq:systemformal}) for $n=2$ and the matrix SignsSytem is the matrix of signs of the coefficients in System, respectively.
			Recall that the notation $\rm{Snum_{1}Lnum_{2}}$ is described in subsection \ref{subsection:alg n2}. \\
			
			\noindent
			{\bf Script S1L00}
			
			\smallskip
			\begin{ttfamily}
				\noindent
				syms a1 a2 a3;\\
				syms l1 l2;\\
				a = [a1 a2 a3];\\
				l = [l1 l2];\\
				S = [1];\\
				assume(l1<l2<a1<a2<a3);\\
				n=2;\\
				A1=(l1-a1).*(l1-a2).*(l1-a3);\\
				A2=(l2-a1).*(l2-a2).*(l2-a3);\\
				A=[A1,A2];\\
				for r=1:2\\
				\hspace*{3mm} for s=1:3\\
				\hspace*{6mm} T(r,s) = (2.*ismember(s,S)-1).*A(r)/(l(r)-a(s));\\
				\hspace*{3mm} end;\\
				end;\\
				T0=[T, zeros(2,1)];\\
				System=[[2.*ismember([1 2 3],S)-1 -1];T0];\\
				SignsSystem=sign(System);\\
			\end{ttfamily}\\
			
			\noindent
			{\bf Script S13L12}
			
			\smallskip
			\begin{ttfamily}
				\noindent
				syms a1 a2 a3;\\
				syms l1 l2;\\
				a = [a1 a2 a3];\\
				l = [l1 l2];\\
				S = [1 3];\\
				assume(a1<l1<a2<l2<a3);\\
				n=2;\\
				A1=(l1-a1).*(l1-a2).*(l1-a3);\\
				A2=(l2-a1).*(l2-a2).*(l2-a3);\\
				A=[A1,A2];\\
				for r=1:2\\
				\hspace*{3mm} for s=1:3\\
				\hspace*{6mm} T(r,s) = (2.*ismember(s,S)-1).*A(r)/(l(r)-a(s));\\
				\hspace*{3mm} end;\\
				end;\\
				T0=[T, zeros(2,1)];\\
				System=[[2.*ismember([1 2 3],S)-1 -1];T0];\\
				SignsSystem=sign(System); \\
			\end{ttfamily}
			
			\noindent
			Two main functions of the algorithm are {\bf eops} and {\bf newsys}. Their functionality is explained in subsection \ref{subsection:alg n2}. \\
			
			\begin{ttfamily}
				\noindent
				function[D] = eops(B,n)\\
				i=0;\\
				while (i < n+1) \&\& (ismember(1,B(i+1,:))==1) \&\& (ismember(-1,B(i+1,:))==1)\\
				\hspace*{3mm}  i=i+1;\\
				end\\
				if i==n+1\\
				\hspace*{3mm}  D=[1 []];\\
				else\\
				\hspace*{3mm}  D=[0 B(i+1,:)];\\
				end\\
				end\\
			\end{ttfamily}\\
			
			\begin{ttfamily}
				\noindent
				function [Sys,B] = newsys(B11,B12,B21,B22) \\
				for j=1:2\\
				\hspace*{3mm}  A=((B11.')*B22(j,:)-((B21(j,:)).')*B12)';\\
				\hspace*{3mm}  Sys(j,:)=B(:)';\\
				end\\
				B=sign(factor(Sys));\\
				end\\
			\end{ttfamily}
			
			
			
			\newpage
			
			\section{Results of Algorithm}
			\label{appendix:results of alg}
			
			\noindent
			\begin{table}[h!]
				\caption{{\bf Yes/No (1/0) table for the existence of real roots of the systems for the subsets $\{1\}$, $\{2\}$ and $\{3\}$.}}
				\label{Tab:1}
				\begin{tabular}{|l|c|c||l|c|c||l|c|c|}
					\hline
					system roots & Y/N  & step & system roots & Y/N  & step & system roots & Y/N  & step \\ \hline
					S1L00        &  0   &  2   & S2L00        &  0   &  3   &   S3L00      &  0   &  3   \\ \hline
					S1L01        &  0   &  1   & S2L01        &  0   &  2   &   S3L01      &  1   &      \\ \hline
					S1L02        &  0   &  2   & S2L02        &  0   &  2   &   S3L02      &  0   &  1   \\ \hline
					S1L03        &  0   &  2   & S2L03        &  1   &      &   S3L03      &  0   &  2   \\ \hline
					S1L11        &  0   &  1   & S2L11        &  0   &  2   &   S3L11      &  0   &  3   \\ \hline
					S1L12        &  0   &  1   & S2L12        &  0   &  2   &   S3L12      &  0   &  1   \\ \hline
					S1L13        &  0   &  1   & S2L13        &  0   &  2   &   S3L13      &  0   &  2   \\ \hline
					S1L22        &  0   &  3   & S2L22        &  0   &  2   &   S3L22      &  0   &  1   \\ \hline
					S1L23        &  1   &      & S2L23        &  0   &  2   &   S3L23      &  0   &  1   \\ \hline
					S1L33        &  0   &  3   & S2L33        &  0   &  3   &   S3L33      &  0   &  2   \\ \hline
				\end{tabular}
			\end{table}
			\begin{table}[h!]
				\caption{{\bf Yes/No (1/0) table for the existence of real roots of the systems for the subsets $\{1,2\}$, $\{1,3\}$ and $\{2,3\}$.}}
				\label{Tab:2}
				\begin{tabular}{|l|c|c||l|c|c||l|c|c|}
					\hline
					system roots  & Y/N & step  & system roots & Y/N  & step  & system roots & R/NR & step  \\ \hline
					S12L00        &  0  &  2    & S13L00       &  1   &       &   S23L00     &  0   &  3pf  \\ \hline
					S12L01        &  0  &  2    & S13L01       &  0   &  3pf  &   S23L01     &  0   &  1    \\ \hline
					S12L02        &  0  &  1    & S13L02       &  0   &  3pf  &   S23L02     &  1   &       \\ \hline
					S12L03        &  0  &  2    & S13L03       &  0   &  3pf  &   S23L03     &  0   &  2    \\ \hline
					S12L11        &  0  &  3pf  & S13L11       &  1   &       &   S23L11     &  0   &  1    \\ \hline
					S12L12        &  0  &  1    & S13L12       &  0   &  3pf  &   S23L12     &  0   &  1    \\ \hline
					S12L13        &  1  &       & S13L13       &  0   &  3pf  &   S23L13     &  0   &  1    \\ \hline
					S12L22        &  0  &  1    & S13L22       &  1   &       &   S23L22     &  0   &  3pf  \\ \hline
					S12L23        &  0  &  1    & S13L23       &  0   &  3pf  &   S23L23     &  0   &  2    \\ \hline
					S12L33        &  0  &  3pf  & S13L33       &  1   &       &   S23L33     &  0   &  2    \\ \hline
				\end{tabular}
			\end{table}
			
			\bigskip
			
			\begin{table}[h!]
				\caption{{\bf Yes/No (1/0) table for the existence of real roots of the system for the subset $\{1,2,3\}$.}}
				\label{Tab:3}
				\begin{tabular}{|l|c|c|}
					\hline
					system roots  & Y/N & step  \\ \hline
					S123L00        &  0  &  1   \\ \hline
					S123L01        &  0  &  1   \\ \hline
					S123L02        &  0  &  1   \\ \hline
					S123L03        &  0  &  1   \\ \hline
					S123L11        &  0  &  3   \\ \hline
					S123L12        &  1  &      \\ \hline
					S123L13        &  0  &  1   \\ \hline
					S123L22        &  0  &  3   \\ \hline
					S123L23        &  0  &  1   \\ \hline
					S123L33        &  0  &  1   \\ \hline
				\end{tabular}
			\end{table}
			
			\bigskip
			\bigskip
			
			\newpage
			
			\section{Additional analysis of the results in Table \ref{Tab:2}}
			\label{appendix:analysis}

			\noindent
			{\bf Example S12L11:}
			In the third step of the algorithm, the matrix of signs (i.e. the matrix $B$) is
			$$\left[\begin{matrix}
			-1 \\ -\sgn(a_1a_2-a_1a_3+a_2a_3-a_2\lambda_1-a_2\lambda_2+\lambda_1\lambda_2) \\ \sgn(a_1a_2+a_1a_3-a_2a_3-a_1\lambda_1-a_1\lambda_2+\lambda_1\lambda_2) \\ -1
			\end{matrix}\right]^{T}.$$
			The expressions $a_1a_2-a_1a_3+a_2a_3-a_2\lambda_1-a_2\lambda_2+\lambda_1\lambda_2$ and $a_1a_2+a_1a_3-a_2a_3-a_1\lambda_1-a_1\lambda_2+\lambda_1\lambda_2$ can be partially factorized as
			$$ a_1a_2-a_1a_3+a_2a_3-a_2\lambda_1-a_2\lambda_2+\lambda_1\lambda_2 = (\lambda_1-a_1)(a_3-a_2)+(a_3-\lambda_2)(a_2-\lambda_1) $$
			and
			$$ a_1a_2+a_1a_3-a_2a_3-a_1\lambda_1-a_1\lambda_2+\lambda_1\lambda_2 = (a_2-\lambda_1)(a_1-a_3)+(a_1-\lambda_1)(a_3-\lambda_2). $$
			Since $a_1<\lambda_1<\lambda_2<a_2<a_3$, the first expression is the sum of two positive summands and the second expression is the sum of two negative summands. Thus, the matrix of signs is $[-1, -1, -1, -1]$ and a positive solution does not exist.

			\vspace{5mm}
			\noindent
			{\bf Example S12L33:}
			In the third step of the algorithm, the matrix of signs is
			$$\left[\begin{matrix}
			1 \\ \sgn(a_1a_2+a_1a_3-a_2a_3-a_1\lambda_1-a_1\lambda_2+\lambda_1\lambda_2) \\ \sgn(a_1a_2-a_1a_3+a_2a_3-a_2\lambda_1-a_2\lambda_2+\lambda_1\lambda_2) \\ 1
			\end{matrix}\right]^{T}.$$
			As in Example S12L11, the expressions can be partially factorized as
			$$ a_1a_2+a_1a_3-a_2a_3-a_1\lambda_1-a_1\lambda_2+\lambda_1\lambda_2 = (a_2-\lambda_1)(a_1-a_3)+(a_1-\lambda_1)(a_3-\lambda_2) $$
			and
			$$ a_1a_2-a_1a_3+a_2a_3-a_2\lambda_1-a_2\lambda_2+\lambda_1\lambda_2 = (\lambda_1-a_1)(a_3-a_2)+(a_3-\lambda_{2})(a_2-\lambda_1). $$
			Since $a_1<a_2<a_3<\lambda_1<\lambda_2$, the first expression is the sum of two positive summands (both are the product of two negative factors) and the second expression is also the sum of two positive summands. Therefore, the matrix of signs is $[1,1,1,1]$ and a positive solution does not exist.

			\vspace{5mm}
			\noindent
			{\bf Example S13L01:}
			In the third step of the algorithm, the matrix of signs is
			$$\left[\begin{matrix}
			1, & -\sgn(a_1a_2+a_1a_3-a_2a_3-a_1\lambda_1-a_1\lambda_2+\lambda_1\lambda_2), & 1
			\end{matrix}\right].$$
			If we use the partial factorization in Example S12L11
			$$ a_1a_2+a_1a_3-a_2a_3-a_1\lambda_1-a_1\lambda_2+\lambda_1\lambda_2 = (a_2-\lambda_1)(a_1-a_3)+(a_1-\lambda_1)(a_3-\lambda_2) ,$$
			from the ordering $\lambda_1 < a_1 < \lambda_2 < a_2 < a_3$ we can not deduce on the sign of the expression. But the expression can be also partially factorized as
			$$ a_1a_2+a_1a_3-a_2a_3-a_1\lambda_1-a_1\lambda_2+\lambda_1\lambda_2 = (a_2-\lambda_1)(a_1-\lambda_2) + (a_3-\lambda_2)(a_1-a_2) ,$$
			and the negativity of the expression is obvious. The matrix of signs is $[1, 1, 1]$ and a positive solution does not exist.

			\vspace{5mm}
			\noindent
			{\bf Example S13L02:}
			In the third step of the algorithm, the matrix of signs is
			$$\left[\begin{matrix}
			1, & -\sgn(a_1a_2+a_1a_3-a_2a_3-a_1\lambda_1-a_1\lambda_2+\lambda_1\lambda_2), & 1
			\end{matrix}\right].$$
			As in the previous example, the expression $a_1a_2+a_1a_3-a_2a_3-a_1\lambda_1-a_1\lambda_2+\lambda_1\lambda_2$ can be partially factorized as
			$$ a_1a_2+a_1a_3-a_2a_3-a_1\lambda_1-a_1\lambda_2+\lambda_1\lambda_2 = (a_2-\lambda_1)(a_1-\lambda_2)+(a_3-\lambda_2)(a_1-a_2).$$
			Since $\lambda_1 < a_1 < a_2 < \lambda_2 < a_3$, the expression is the sum of two negative summands. The matrix of signs is $[1,1,1]$ and a positive solution does not exist.

			\vspace{5mm}
			\noindent
			{\bf Example S13L03:}
			In the third step of the algorithm, the matrix of signs is
			$$\left[\begin{matrix}
			1, & -\sgn(a_1a_2+a_1a_3-a_2a_3-a_1\lambda_1-a_1\lambda_2+\lambda_1\lambda_2), & 1
			\end{matrix}\right].$$
			Since
			$$ a_1a_2+a_1a_3-a_2a_3-a_1\lambda_1-a_1\lambda_2+\lambda_1\lambda_2 = (a_2-\lambda_1)(a_1-\lambda_2)+(a_3-\lambda_2)(a_1-a_2)$$
			(see previous two examples) and $\lambda_1 < a_1 < a_2 < a_3 < \lambda_2$, the matrix of signs is $[1,1,1]$ and a positive solution does not exist.

			\vspace{5mm}
			\noindent
			{\bf Example S13L12:}
			In the third step of the algorithm, the matrix of signs is
			$$\left[\begin{matrix}
			\sgn(a_1a_2-a_1a_3-a_2a_3+a_3\lambda_1+a_3\lambda_2-\lambda_1\lambda_2), &  1, & 1
			\end{matrix}\right].$$
			The expression $a_1a_2-a_1a_3-a_2a_3+a_3\lambda_1+a_3\lambda_2-\lambda_1\lambda_2$ can be partially factorized as
			$$ a_1a_2-a_1a_3-a_2a_3+a_3\lambda_1+a_3\lambda_2-\lambda_1\lambda_2 = (\lambda_1-a_1)(a_3-a_2)+(\lambda_1-a_3)(a_2-\lambda_2). $$
			Since $a_1 < \lambda_1 < a_2 < \lambda_2 < a_3$, the considered expression is the sum of two positive summands. The matrix of signs is $[1,1,1]$ and a positive solution does not exist.

			\vspace{5mm}
			\noindent
			{\bf Example S13L13:}
			In the third step of the algorithm, the matrix of signs is
			$$\left[\begin{matrix}
			\sgn(a_1a_2-a_1a_3-a_2a_3+a_3\lambda_1+a_3\lambda_2-\lambda_1\lambda_2), &  1, & 1
			\end{matrix}\right].$$
			Since
			$$ a_1a_2-a_1a_3-a_2a_3+a_3\lambda_1+a_3\lambda_2-\lambda_1\lambda_2 = (\lambda_1-a_1)(a_3-a_2)+(\lambda_1-a_3)(a_2-\lambda_2) $$
			(see the previous example) and $a_1 < \lambda_1 < a_2 < a_3 < \lambda_2$, the considered expression is the sum of two positive summands. The matrix of signs is $[1,1,1]$ and a positive solution does not exist.

			\vspace{5mm}
			\noindent
			{\bf Example S13L23:}
			In the third step of the algorithm, the matrix of signs is
			$$\left[\begin{matrix}
			1, & \sgn(a_1a_2+a_1a_3-a_2a_3-a_1\lambda_1-a_1\lambda_2+\lambda_1\lambda_2), &  1
			\end{matrix}\right].$$
			Since
			$$ a_1a_2+a_1a_3-a_2a_3-a_1\lambda_1-a_1\lambda_2+\lambda_1\lambda_2 = (a_2-\lambda_1)(a_1-a_3)+(a_1-\lambda_1)(a_3-\lambda_2) $$
			(see Example S12L11) and $a_1 < a_2 < \lambda_1 < a_3 < \lambda_2$, the considered expression is the sum of two positive summands. The matrix of signs is $[1,1,1]$ and a positive solution does not exist.

			\vspace{5mm}
			\noindent
			{\bf Example S23L00:}
			In the third step of the algorithm, the matrix of signs is
			$$\left[\begin{matrix}
			-1 \\
			-\sgn(a_1a_2-a_1a_3+a_2a_3-a_2\lambda_1-a_2\lambda_2+\lambda_1\lambda_2) \\
			\sgn(a_1a_2-a_1a_3-a_2a_3+a_3\lambda_1+a_3\lambda_2-\lambda_1\lambda_2) \\
			-1
			\end{matrix}\right]^{T}. $$
			If we use the partial factorization in Example S12L11
			$$ a_1a_2-a_1a_3+a_2a_3-a_2\lambda_1-a_2\lambda_2+\lambda_1\lambda_2 = (\lambda_1-a_1)(a_3-a_2) + (a_3-\lambda_2)(a_2-\lambda_1), $$
			from the ordering $\lambda_1 < \lambda_2 < a_1 < a_2 < a_3$ we can not deduce on the sign of the expression. After rearranging
			$$ a_1a_2-a_1a_3+a_2a_3-a_2\lambda_1-a_2\lambda_2+\lambda_1\lambda_2 = (\lambda_1-a_1)(\lambda_2-a_2) + (a_3-\lambda_2)(a_2-a_1), $$
			the positivity of the expression obvious. Since
			$$ a_1a_2-a_1a_3-a_2a_3+a_3\lambda_1+a_3\lambda_2-\lambda_1\lambda_2 =(\lambda_1-a_1)(a_3-a_2)+(\lambda_1-a_3)(a_2-\lambda_2) $$
			(see Example S13L12), the expression is the sum of two negative summands. Therefore, the matrix of signs is $[-1,-1,-1,-1]$ and a positive solution does not exist.

			\vspace{5mm}
			\noindent
			{\bf Example S23L22:}
			In the third step of the algorithm, the matrix of signs is
			$$\left[\begin{matrix}
			1 \\
			\sgn(a_1a_2-a_1a_3+a_2a_3-a_2\lambda_1-a_2\lambda_2+\lambda_1\lambda_2) \\
			\sgn(a_1a_2-a_1a_3-a_2a_3+a_3\lambda_1+a_3\lambda_2-\lambda_1\lambda_2) \\
			1
			\end{matrix}\right]^{T}. $$
			Since
			$$ a_1a_2-a_1a_3+a_2a_3-a_2\lambda_1-a_2\lambda_2+\lambda_1\lambda_2 = (\lambda_1-a_1)(\lambda_2-a_2)+(a_3-\lambda_2)(a_2-a_1) $$
			and
			$$ a_1a_2-a_1a_3-a_2a_3+a_3\lambda_1+a_3\lambda_2-\lambda_1\lambda_2 = (\lambda_1-a_1)(a_3-a_2)+(\lambda_1-a_3)(a_2-\lambda_2), $$
			(see the previous example) and $a_1 < a_2 < \lambda_1 < \lambda_2 < a_3$, both expressions are the sums of two positive summands. The matrix of signs is $[1,1,1,1]$ and a positive solution does not exist.

			
			\bigskip
			
			
			
			\noindent
			{\bf References}
			
			
			
			

		\end{document}